\documentclass{birkjour}

\usepackage{amsmath}
\usepackage{hyperref,upref} 
\hypersetup{citebordercolor=.1 .6 1}

\IfFileExists{mmymtpro2.sty}{%
  \usepackage[mtpcal,mtpscr,subscriptcorrection]{mymtpro2}
	\newcommand{\one}{\mathbb{1}}
	\renewcommand{\cup}{\cupprod}
	\renewcommand{\cap}{\capprod}
	\renewcommand{\enlargethispage}[1]{}
}%
{%
  \usepackage{amssymb}
  \usepackage{mathrsfs}
}


%
\newtheorem{theorem}{Theorem}[section]

\newtheorem{lemma}[theorem]{Lemma}
\theoremstyle{definition}
\newtheorem{definition}[theorem]{Definition}
\newtheorem{remark}[theorem]{Remark}

\numberwithin{equation}{section} 

\newcommand{\HAM}{hierarchical Anderson model}
\newcommand{\LT}{Lifshits tails}
\newcommand{\RV} {random variables}
\newcommand{\hl}{hierarchical Laplacian}
\newcommand{\sa}{self-adjoint}

\let\punkt\cdot 
\renewcommand{\cdot}{\boldsymbol{\punkt}} 

\newcommand{\PP}{\mathbb{P}}
\newcommand{\X}{\mathbb{X}} 
\newcommand{\E}{\mathbf{E}}
\newcommand{\Q}{Q_\kappa}
\newcommand{\N}{{N}}
\newcommand{\Nk}{\mathrm{N}, \kappa}
\newcommand{\Dk}{\mathrm{D}, \kappa}
\newcommand{\Xk}{\mathrm{X}, \kappa}
\newcommand{\Nr}{\mathrm{N}, r}
\newcommand{\Dr}{\mathrm{D}, r}
\newcommand{\Xr}{\mathrm{X}, r}
\newcommand{\Ha}{{H}} 
\newcommand{\Han}{\Ha^{\omega}_{\Nk}}
\newcommand{\Had}{\Ha^{\omega}_{\Dk}}

\newcommand{\Hd}{\wtilde{\Ha}_{\Dk}^{\omega}}
\newcommand{\Emd}{E_{\mathrm{max}}(\Had)}
\newcommand{\Emn}{E_{\mathrm{max}}(\Han)}
\newcommand{\nn}{\nonumber}
\newcommand{\diagdots}{\reflectbox{$\ddots$}} 
\newcommand{\Chi}{\raisebox{.2ex}{$\chi$}}

\renewcommand{\le}{\leqslant} \let\leq\le
\renewcommand{\ge}{\geqslant} \let\geq\ge
\renewcommand{\emptyset}{\varnothing}
\providecommand{\one}{\mathbf{1}}
\providecommand{\wtilde}[1]{\widetilde{#1}}
\providecommand{\bigtimes}{\mathop{\raisebox{-1pt}{\LARGE $\times$}}}

\DeclareMathOperator{\e}{e} 
\DeclareMathOperator{\supp}{supp}
\DeclareMathOperator{\spec}{spec}

\DeclareMathOperator{\cohu}{ch}
\DeclareMathOperator{\tr}{tr}
\DeclareMathOperator{\sydi}{\raisebox{.6pt}{\scriptsize$\bigtriangleup$}}

%
\def\pper{.}
\def\HarvardComma{}

\newcounter{aucount}
\setcounter{aucount}{0}
\newif\ifedplural
\newif\ifper\pertrue

\def\au#1#2{{#1 #2}}
\def\lau#1#2{{#1 #2}, }
\def\ed#1#2{\ifnum\theaucount=0\relax\fi{#1 #2}\addtocounter{aucount}{1}}
\def\led#1#2{\ifnum\theaucount=0\relax\edpluralfalse\else\edpluraltrue\fi{#1
    #2} (\editorname.),\setcounter{aucount}{0}}
\def\editorname{\ifedplural Eds\else Ed\fi}

\def\et{\ifnum\theaucount=1\else\HarvardComma\fi{} and\ }
\def\ti#1{\emph{#1}.\ifper\fi\pertrue}

\makeatletter

\def\bti{\@ifnextchar[\bbti\bbbti}
\def\bbti[#1]#2{\emph{#2}, #1.}
\def\bbbti#1{\emph{#1}.}

\def\z{\@ifnextchar[\zz\zzz}
\def\zz[#1]#2#3#4#5{\perfalse\emph{#2}\ifx @#3@\relax\else{} \textbf{#3}\fi, #4 \ifx
  @#5@\relax\else (#5)\fi [#1]\ifper\pper\fi\pertrue} 
\def\zzz#1#2#3#4{{#1}\ifx @#2@\relax\else{} \textbf{#2}\fi, #3 \ifx @#4@\relax\else
  (#4)\fi\ifper\pper\fi\pertrue}

\def\pub{\@ifstar\pubstar\pubnostar}
\def\pubnostar{\@ifnextchar[\@@pubnostar\@pubnostar}
\def\@@pubnostar[#1]#2#3#4{#2, #3, #4, #1\ifper\pper\fi\pertrue}
\def\@pubnostar#1#2#3{#1, #2, #3\ifper\pper\fi\pertrue}
\def\pubstar[#1]#2#3#4{\perfalse #2, #3, #4 [#1]\pper\pertrue}

\def\IN{\@ifnextchar[\IIN\IIIN}
\def\IIN[#1]{Pp. #1 in }
\def\IIIN{In }

\makeatother
%
%
\makeatletter
\def\ref#1{{\upshape{\@newref{#1}}}}
\def\@cite#1#2{{\m@th\upshape[{#1\if@tempswa, #2\fi}]}}
\makeatother
%

\hyphenation{Hier-ar-chi-cal Lif-shits}

\sloppy


\begin{document}

\title{\raggedright \LT{} in the \HAM}

\author{Simon Kuttruf}
\address{Mathematisches Institut, Universit\"at M\"unchen,
	Theresienstra\ss{}e 39, 80333 M\"unchen, Germany}

\author{Peter M\"uller}
\address{Mathematisches Institut, Universit\"at M\"unchen,
	Theresienstra\ss{}e 39, 80333 M\"unchen, Germany}
\email{mueller@lmu.de} 

\thanks{\emph{Version of 8 August 2011} -- to appear in Ann.\ Henri Poincar\'e (2011). The final publication is available at www.springerlink.com, \href{http://dx.doi.org/10.1007/s00023-011-0132-1}{\texttt{DOI 10.1007/s00023-011-0132-1}}\,.\\ 
	This work has been partially supported by Sfb/Tr 12 of the German Research Foundation (DFG). Simon Kuttruf dedicates this work to Mark.}

\keywords{Random Schr\"odinger operators, hierarchical Anderson model, Lifshits tails, Dirichlet-Neumann bracketing}

\subjclass{Primary 47B80;  
	Secondary 81Q10,  
	93A13   
	}

\begin{abstract}
	We prove that the homogeneous hierarchical Anderson model exhibits a Lifshits tail at the upper 
	edge of its spectrum. The Lifshits exponent is given in terms of the spectral dimension 
	of the homogeneous
	hierarchical structure. Our approach is based on Dirichlet-Neumann bracketing for the 
	hierarchical Laplacian and a large-deviation argument.
\end{abstract}

\maketitle

\section{Introduction}
\label{intro}

Hierarchical models have a long tradition in statistical physics. Dyson \cite{Dys69, Dys71} introduced them as an auxiliary tool in his study of phase transitions in the one-dimensional Ising ferromagnet with long-range interactions. 
An important feature of hierarchical models is that they preserve their structure under renormalisation-group transformations.
Bleher and Sinai \cite{BS, BlSi75} exploited this to determine critical properties of hierarchical spin models.  

Bovier \cite{Bovier} seems to be the first who studied the hierarchical Anderson model, that is, the Anderson model on a countably infinite configuration space with kinetic energy given by the hierarchical Laplacian. He also pursued a renormalisation-group approach and showed analyticity properties of the density of states. Under very mild conditions, Molchanov \cite{Mol94, M} established that the \HAM{} with Cauchy-distributed random variables has only pure point spectrum. In particular, his result does not require a homogeneous hierarchical structure. More recently, Kritchevski \cite{K1, K2, K3} continued the investigations of the hierarchical Anderson model. In \cite{K1, K2} he removed the requirement for a Cauchy distribution and proved Anderson localisation at all energies and for general single-site distributions. However, his proof only works for homogeneous hierarchical structures with spectral dimension $d_{s} \le 4$ (the spectral dimension will be introduced in Definition~\ref{d-s-def} below).
In \cite{K3} he proved Poisson statistics of rescaled eigenvalue distributions for the homogeneous hierarchical Anderson model with $d_{s} <1$. If the hierarchical Laplacian is modified as to contain also suitable negative hopping rates, then Monthus and Garel \cite{MoGa11} argue in favour of a localisation-delocalisation transition in the hierarchical Anderson model. Fyodorov, Ossipov and Rodriguez found analytical and numerical evidence for a localisation-delocalisation transition for random matrices with a hierarchical structure of the correlations between the matrix entries \cite{FyOsRo09}. 

In this paper we prove the occurrence of a Lifshits tail at the upper spectral edge of the hierarchical Anderson model. More precisely, we only deal with homogeneous hierarchical structures for which one can define a spectral dimension $d_{s}$ (but we do not impose any restriction on the value of $d_{s}$).  In particular, we find that the Lifshits exponent of the integrated density of states of the hierarchical Anderson model coincides with $d_{s}/2$. Technically, we follow the approach that was successfully employed for Poisson and alloy-type random Schr\"odinger operators by Kirsch and Martinelli \cite{KM}, and for the Anderson model on the lattice by Simon \cite{Simon}, see also the recent survey by Kirsch \cite{Kirsch}. The method requires Dirichlet-Neumann bracketing for finite-volume operators which we establish for suitable finite-volume restrictions of the hierarchical Laplacian. 
 
The quantity $d_{s}/2$, which we find for the Lifshits exponent in Theorem~\ref{lif-thm}, is also referred to as the van Hove exponent, since it governs the van Hove ``singularities'' of the integrated density of states in the absence of disorder, 
see also Lemma~\ref{vanHove}. 
Amazingly, the equality of the Lifshits and van Hove exponent is known to hold for very different types of random hopping models. First and foremost we mention the standard alloy-type random Schr\"odinger operators in $\mathbb{R}^{d}$ or $\mathbb{Z}^{d}$, $d\in\mathbb{N}$. In that case $d_{s}=d$, and the Lifshits exponent equals $d/2$, see e.g.\ \cite{KM, Simon} and references therein. But it also holds for the integrated density of states of the Dirichlet Laplacian on percolation subgraphs. This was first shown for bond percolation on $\mathbb{Z}^{d}$ in \cite{KiMu06, MuSt07} and then generalised to a large class of Cayley graphs in \cite{AV}, see also the recent review \cite{MuSt11}. From this perspective, Theorem~\ref{lif-thm} is interesting because it establishes the equality of the Lifshits and van Hove exponent for a random perturbation of the rather peculiar hierarchical Laplacian. The deeper reason behind it in all mentioned models is the intuition explained in Remark~\ref{strategy}.

The paper is organised as follows. After introducing our notation and presenting the main result, Theorem~\ref{lif-thm}, in Sect.~\ref{HAM}, we turn to Dirichlet-Neumann bracketing for hierarchical finite-volume operators in Sect.~\ref{Sandwich}. Finally, Sect.~\ref{LT} is devoted to the proof of Theorem~\ref{lif-thm}, and the appendix compiles the ergodic structure of the \HAM.

\enlargethispage{1ex}


\section{Model and result}\label{HAM}

We consider a countably infinite configuration space $\X$ and the quantum Hamiltonian
\begin{equation}\label{Hamiltonian}
\Ha^\omega := \Delta + V^\omega
\end{equation}
acting on the Hilbert space $\ell^2(\X)$ of complex-valued, square-summable sequences over $\X$. The Hamiltonian describes diagonal disorder through its potential energy, which acts as the multiplication operator
\begin{equation}
(V^{\omega}\psi)(x):=\omega_x \,\psi(x)\quad \text{~for all~} \psi\in \ell^2(\X)\, \text{~and all~} x\in\X.
\end{equation}
Here, $\omega:=(\omega_x)_{x\in\X}$ is a family of independent and identically distributed (i.i.d.), real-valued random variables. We think of them as being canonically realised in the probability space $\Omega:=\mathbb{R}^{\X}$, equipped with the product Borel $\sigma$-algebra $\bigotimes_{x\in\X}\mathcal{B}_{\mathbb{R}}$ and the product probability measure 
$\mathbb{P}:=\bigotimes_{x\in\X}\mathbb{P}_0$.
We assume throughout that the single-site distribution $\mathbb{P}_0$ is compactly supported, $\supp\mathbb{P}_0\subseteq[v_{-},v_{+}]$ for some $v_{-},v_{+}\in\mathbb{R}$, in order to avoid irrelevant technical complications in dealing with unbounded operators. For the proof of the Lifshits tail in 
Theorem~\ref{lif-thm} we suppose in addition that $\mathbb{P}_0$ is not concentrated at one single point, i.e.\ 
\begin{equation} \label{Ass1}
\mathbb{P}_0(\{v\})<1 \quad \text{~for every~} v\in \mathbb{R},
\end{equation}
and that its upper tail decays no faster than any power, i.e.\ there exist real constants $C,\mu >0$ such that 
\begin{equation} \label{Ass2} 
 \mathbb{P}_0(\left[v_{+}-\varepsilon,v_{+} \right]) \geq C\varepsilon^\mu
\end{equation}
for every $\varepsilon>0$ sufficiently small.

The operator $\Delta$ in \eqref{Hamiltonian} is the hierarchical Laplacian \eqref{Lapl} and refers to a \emph{hierarchical structure} on $\X$, which we need to explain first. A hierarchical structure on $\X$ is a sequence of partitions $(\mathcal{P}_r)_{r\in\mathbb{N}_{0}}$ of $\X$ together with a sequence  $(n_{r})_{r\in\mathbb{N}}$ of natural numbers such that the properties (H1) -- (H3) below hold. By definition, each partition subdivides $\X$ into mutually disjoint subsets, which we call \emph{clusters}. The clusters of $\mathcal{P}_r$ are referred to as \emph{clusters of rank} $r$.
\begin{enumerate}
 \item[(H1)] $\mathcal{P}_{0}$ is the trivial partition, the clusters of which consist precisely of the single elements of $\X$.
 \item[(H2)] Every cluster of rank $r \in\mathbb{N}$ is a union of $n_{r}$ distinct clusters of rank $r-1$.
 \item[(H3)] Given $x,y \in\X$ there is a cluster of some rank containing both $x$ and $y$.
\end{enumerate}

\medskip
\noindent
We denote by $Q_r(x)$ the unique cluster of rank $r$ containing $x\in \X$, and we write $|A|$  for the number of elements of a finite set $A$. By (H2) the number of elements $|Q_{r}(x)| = \prod_{r'=1}^{r} n_{r'}$ of this cluster does not depend on $x\in\X$. Thus, we will simply write $|Q_{r}|$ for the cluster size. The elements of $\X$ can be enumerated 
\begin{equation}
	\label{enumeration}
	\mathbb{N}_{0} \rightarrow \X,  \qquad k \mapsto x_{k},
\end{equation}
in such a way that $x_{k_{1}}$ and $x_{k_{2}}$ belong to the same cluster of rank $r$ if and only if there exists $M\in\mathbb{N}_{0}$ with $ k_{1}, k_{2} \in \{M |Q_{r}|,\ldots, (M+1)|Q_{r}| -1\}$, see Fig.~\ref{fighs}.

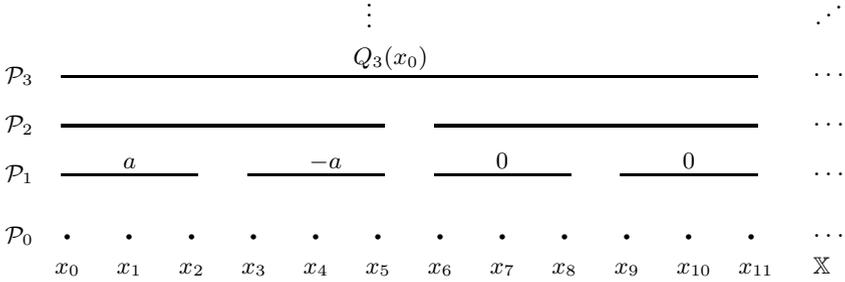
\begin{figure}[t]
	\setlength{\unitlength}{0.014\textwidth}
\begin{picture}(65,26)(-4,-4)
	\multiput(0,0)(5,0){12}{\circle*{0.5}} 
	\put(60,0){$\ldots$}
	\put(-1,-3){\small $x_0$}
	\put(4,-3){\small $x_1$}
	\put(9,-3){\small $x_2$}
	\put(14,-3){\small $x_3$}
	\put(19,-3){\small $x_4$}
	\put(24,-3){\small $x_5$}
	\put(29,-3){\small $x_6$}
	\put(34,-3){\small $x_7$}
	\put(39,-3){\small $x_8$}
	\put(44,-3){\small $x_9$}
	\put(49,-3){\small $x_{10}$}
	\put(54,-3){\small $x_{11}$}
	\put(60,-3){\small $\mathbb{X}$}
	\thicklines
	\multiput(-0.5,5)(15,0){4}{\line(1,0){11}} \put(60,5){$\ldots$} 
	\put(4.5,5.5){\small $a$} \put(19.5,5.5){\small $-a$}
	\put(34.5,5.5){\small $0$} \put(49.5,5.5){\small $0$}
	\multiput(-0.5,9)(30,0){2}{\line(1,0){26}}\put(60,9){$\ldots$} 
	\put(-0.5,13){\line(1,0){56}} \put(60,13){$\ldots$} 
	\put(23,14){\small $Q_3(x_0)$} \put(24,17){$\vdots$} \put(60,17){$\diagdots$}
	\put(-5,-0.5){\small $\mathcal{P}_0$}
	\put(-5,4.5){\small $\mathcal{P}_1$}
	\put(-5,8.5){\small $\mathcal{P}_2$}
	\put(-5,12.5){\small $\mathcal{P}_3$}
\end{picture}
	\caption{Sketch of a hierarchical structure for $n_1=3, n_2=2, n_3=2$ up to the 
		rank-$3$-cluster containing $x_0$.
		A function taking on the values $\pm a\in\mathbb{C}\setminus \{0\}$, when restricted to the 
		clusters $Q_1(x_0), Q_1(x_3)$, and being equal to zero everywhere else is an 
		eigenfunction of the hierarchical Laplacian $\Delta$ corresponding to the eigenvalue $\lambda_1$.} 
	\label{fighs}
\end{figure}

A hierarchical structure is called \emph{homogeneous of degree $n\in\mathbb{N}$}, if 
\begin{equation}\label{homogeneous}n_r=n \quad \text{for all } r\in\mathbb{N}.
\end{equation}
In this case, the size of any cluster of rank $r$ is given by $|Q_r|=n^r$.

Given a sequence of probability weights $(p_s)_{s\in\mathbb{N}}$, $0 < p_{s}< 1$ for all $s\in\mathbb{N}$ and $\sum_{s=1}^\infty\,p_s=1$, the \emph{hierarchical Laplacian} is defined as the weighted sum 
\begin{equation}
	\label{Lapl}
 	\Delta:= \sum_{s=1}^\infty \, p_s\, \E_s 
\end{equation}
of the \emph{cluster averaging operators} $\E_s\colon \ell^2(\X)\rightarrow \ell^2(\X)$, 
\begin{equation}
	(\E_s\psi)(x) := \frac{1}{|Q_s|}\sum_{y\in Q_s(x)}\psi(y),
\end{equation}
where $\psi\in \ell^2(\X)$, $x\in\X$ and $s\in\mathbb{N}_{0}$.
For convenience, we also introduce $p_0:=0$. Note that $\E_{0} = \one$, the identity operator, and that $p_{s} \neq 0$ for every $s\in\mathbb{N}$.
The random operator \eqref{Hamiltonian} is referred to as the \emph{hierarchical Anderson Hamiltonian} or the \emph{\HAM}. We use the additional specification ``homogeneous'' if \eqref{homogeneous} holds.

The basic spectral theorem for the \sa{} \hl{} $\Delta$ is

\begin{theorem}
	The spectral decomposition of the \hl{} $\Delta$ reads
	\begin{equation} 
		\label{Decomp}
		\Delta = \sum_{r=0}^{\infty} \lambda_r \left( \E_r - \E_{r+1} \right),
	\end{equation}
	where 
	\begin{equation}
		\lambda_r := \sum_{s=0}^r\,p_s, \quad r\in \mathbb{N}_{0}, \label{eigenvalues}
	\end{equation} 
	are its eigenvalues (of infinite multiplicity) and $\E_r - \E_{r+1}$ is the orthogonal projection onto 	
	the eigenspace corresponding to $\lambda_r$. 
	In particular, these eigenvalues and their accumulation point 
	$\lambda_\infty:=1$ belong to the essential spectrum of $\Delta$.
\end{theorem}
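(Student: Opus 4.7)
The whole proof is organised around the tower identity
\begin{equation*}
\E_r \E_s = \E_{\max(r,s)} \qquad \text{for all } r,s \in \mathbb{N}_0,
\end{equation*}
which follows immediately from the nesting $Q_r(x) \subseteq Q_{r+1}(x)$: iterated averaging over nested clusters reduces to a single averaging over the larger one. Combined with a Fubini calculation showing self-adjointness of each $\E_s$, this makes every $\E_s$ an orthogonal projection, and the family $(\E_r)_{r \geq 0}$ is therefore decreasing in the projection order with $\E_0 = \one$. A Cauchy--Schwarz bound $|(\E_r \psi)(x)|^2 \leq \|\psi\|^2/|Q_r|$, combined with the fact that $|Q_r| \to \infty$ (forced by~(H3) and the infiniteness of $\X$), yields $\E_r \to 0$ strongly as $r \to \infty$.

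Next I would introduce the spectral projections $P_r := \E_r - \E_{r+1}$. The tower identity gives $P_r P_{r'} = \delta_{r,r'} P_r$, so the $P_r$ are mutually orthogonal projections, and telescoping together with $\E_r \to 0$ strongly produces the resolution of identity $\sum_{r=0}^\infty P_r = \one$ in the strong operator topology. The key computation is
\begin{equation*}
\E_s P_r = \E_{\max(s,r)} - \E_{\max(s,r+1)} = \begin{cases} P_r & \text{if } s \leq r,\\ 0 & \text{if } s \geq r+1, \end{cases}
\end{equation*}
which, together with $p_0 = 0$, yields
\begin{equation*}
\Delta P_r = \sum_{s=1}^\infty p_s \E_s P_r = \Bigl(\sum_{s=0}^r p_s\Bigr) P_r = \lambda_r P_r.
\end{equation*}
Summing over $r$ and interchanging the bounded operator $\Delta$ with the strongly convergent series $\sum_r P_r = \one$ produces \eqref{Decomp}.

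Finally, each $P_r$ is nonzero with infinite-dimensional range --- on each of the infinitely many clusters $C$ of rank $r+1$, the subspace of functions supported on $C$, constant on each rank-$r$ subcluster of $C$, and summing to zero over $C$, has dimension $n_{r+1}-1$ --- so $\lambda_r$ is genuinely an eigenvalue of infinite multiplicity. The partial sums $\lambda_r$ increase to $\lambda_\infty = \sum_{s=0}^\infty p_s = 1$, so $1$ is an accumulation point of eigenvalues of $\Delta$ and hence belongs to its essential spectrum.

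The only real obstacle is the careful handling of strong-operator convergence of the two series $\Delta = \sum_s p_s \E_s$ and $\sum_r \lambda_r P_r$, but the uniform bounds $\|\E_s\| = 1$, the mutual orthogonality of the $P_r$, and $\sum_s p_s = 1$ keep these manipulations routine.
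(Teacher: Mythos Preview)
Your proof is correct and self-contained. The paper does not actually give a proof of this theorem; it simply refers to \cite[Thm.~1.1]{K1}. Your argument via the tower identity $\E_r\E_s=\E_{\max(r,s)}$, the resulting projection calculus for $P_r=\E_r-\E_{r+1}$, and the telescoping resolution of identity is the standard and natural route, and matches in spirit what one finds in Kritchevski's paper.

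One small point worth tightening: the pointwise bound $|(\E_r\psi)(x)|^2\le\|\psi\|^2/|Q_r|$ does not by itself give $\|\E_r\psi\|_2\to 0$, since you are summing over infinitely many~$x$. The cleanest fix is to use that $\E_r$ is a projection, so $\|\E_r\psi\|^2=\langle\psi,\E_r\psi\rangle$, and then check strong convergence on the dense set of finitely supported vectors (for $\psi=\delta_{x_0}$ one computes $\|\E_r\delta_{x_0}\|^2=1/|Q_r|$ directly); the uniform bound $\|\E_r\|=1$ then extends this to all of $\ell^2(\X)$. A second minor caveat: your infinite-multiplicity argument gives $\dim\mathrm{ran}\,P_r\big|_C=n_{r+1}-1$, which is positive only when $n_{r+1}\ge 2$; this is implicitly assumed throughout the paper (and automatic in the homogeneous case $n\ge 2$ that the main theorem addresses).
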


We refer to \cite[Thm.~1.1]{K1} for a proof. The spectral value $\lambda_{\infty}=1$ is never an eigenvalue since $p_s>0$ for every $s\in\mathbb{N}$.
Any eigenfunction $\psi_r\in \ell^2(\X)$ corresponding to the eigenvalue $\lambda_r$ is constant on every cluster of rank $r$. In addition, the sum of these constants over all rank-$r$-clusters which belong to the same rank-$(r+1)$-cluster is always equal to zero, see Fig.~\ref{fighs} for a sketch. 

Using \cite[Lemma~1.2]{K2} together with ergodicity of $\Ha^{\omega}$, see Lemmas~\ref{ergodic-shifts} 
and~\ref{covariant}, we conclude the standard

\begin{lemma}
 	\label{det-spec}
	There exists a non-random compact subset $\Sigma\subset \mathbb{R}$ such that 
	$\spec\Ha^{\omega} = \Sigma$ 
	for $\mathbb{P}$-a.e.\ $\omega\in\Omega$. The location of the deterministic spectrum $\Sigma$ obeys 
	\begin{equation}
		\label{spectrum}
		 \spec\Delta + \supp\mathbb{P}_0 \subseteq \Sigma 
		\subseteq \big( \spec\Delta + \cohu (\supp\mathbb{P}_0 )\big)
		\cap \big( [0,1] + \supp\mathbb{P}_0 \big),
	\end{equation}
	where $\cohu B$ denotes the convex hull of a set $B\subset \mathbb{R}$. In particular, we have 
	$\sup (\inf) \Sigma = \sup(\inf) \{\spec\Delta + \supp\mathbb{P}_0\}$, 
	and if $\supp\mathbb{P}_0$ is even connected, then also $\Sigma = \spec\Delta + \supp\mathbb{P}_0$.
\end{lemma}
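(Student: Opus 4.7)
The plan is to combine the appendix's ergodicity results with \cite[Lemma~1.2]{K2}, and then extract the subsidiary claims.

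First I would verify that $\spec\Ha^{\omega}$ is $\mathbb{P}$-a.s.\ equal to a fixed closed set $\Sigma$. By Lemma~\ref{covariant} the family $(\Ha^{\omega})_{\omega}$ is covariant under a group of measure-preserving transformations on $\Omega$, and by Lemma~\ref{ergodic-shifts} this group acts ergodically. The standard Pastur argument then applies: for each open interval $I\subset\mathbb{R}$ with rational endpoints the event $\{\omega : \spec\Ha^{\omega}\cap I\neq\emptyset\}$ is invariant under the group action, hence has $\mathbb{P}$-measure $0$ or $1$, and intersecting over the countable collection of such $I$ pins down a deterministic closed set $\Sigma$ that equals $\spec\Ha^{\omega}$ for $\mathbb{P}$-a.e.\ $\omega$. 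Compactness of $\Sigma$ is immediate from the uniform bound $\|\Ha^{\omega}\|\leq\|\Delta\|+\|V^{\omega}\|\leq 1+\max(|v_-|,|v_+|)$.

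Next I would invoke \cite[Lemma~1.2]{K2} for the two-sided inclusion \eqref{spectrum}. The lower inclusion $\spec\Delta+\supp\mathbb{P}_0\subseteq\Sigma$ is obtained by a Weyl-sequence construction tailored to the hierarchical structure: given $\lambda_{r}\in\spec\Delta$ and $v\in\supp\mathbb{P}_0$, the eigenfunctions arising from \eqref{Decomp} are supported in a single rank-$(r+1)$ cluster and constant on its rank-$r$ subclusters (cf.\ Fig.~\ref{fighs}); because of the positivity $\mathbb{P}_0([v-\varepsilon,v+\varepsilon])>0$ together with the i.i.d.\ structure and the Borel--Cantelli lemma, $\mathbb{P}$-a.s.\ there are infinitely many disjoint clusters on which the random potential is uniformly within $\varepsilon$ of $v$; translating such an eigenfunction into one of those clusters yields an approximate eigenfunction of $\Ha^{\omega}$ with eigenvalue near $\lambda_{r}+v$. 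The upper inclusions combine the a.s.\ identity $\spec V^{\omega}=\supp\mathbb{P}_0$, the containment $\spec\Delta\subseteq[0,1]$ from \eqref{Decomp}, and a spectral-projection decomposition of a Weyl sequence along the eigenspaces of $\Delta$.

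For the remaining assertions, the three sets bracketing $\Sigma$ in \eqref{spectrum} all share the same supremum $1+v_+$ and infimum $0+v_-$, since $\cohu(\supp\mathbb{P}_0)$ has the same extrema as $\supp\mathbb{P}_0$; this yields the identities for $\sup\Sigma$ and $\inf\Sigma$. If $\supp\mathbb{P}_0$ is connected then $\supp\mathbb{P}_0=\cohu(\supp\mathbb{P}_0)$, so the chain \eqref{spectrum} collapses to the single equality $\Sigma=\spec\Delta+\supp\mathbb{P}_0$.

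The main obstacle is the refined upper inclusion $\Sigma\subseteq\spec\Delta+\cohu(\supp\mathbb{P}_0)$: the naive quadratic-form bracketing $v_-\one+\Delta\leq\Ha^{\omega}\leq v_+\one+\Delta$ only gives the coarser containment $\Sigma\subseteq[v_-+\inf\spec\Delta,\,v_++\sup\spec\Delta]$. Capturing the band structure inherited from the purely pure-point spectrum $\spec\Delta=\overline{\{\lambda_{r}\}_{r\in\mathbb{N}_{0}}}\cup\{1\}$ is precisely what makes the careful spectral-projection argument in \cite[Lemma~1.2]{K2} necessary.
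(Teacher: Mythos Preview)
Your proposal is correct and matches the paper's approach exactly: the paper does not give a self-contained proof but simply states that the lemma follows from \cite[Lemma~1.2]{K2} together with the ergodicity results Lemmas~\ref{ergodic-shifts} and~\ref{covariant}, which is precisely the route you outline. Your additional exposition of how the Weyl-sequence and spectral-projection arguments in \cite{K2} work, and your derivation of the subsidiary claims from \eqref{spectrum}, go beyond what the paper spells out but are consistent with it.
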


\begin{remark}
The preceding lemma strengthens Lemma~1.2 in \cite{K2}, in as much as non-randomness of $\spec\Ha^{\omega}$ is established irrespective of the connectedness of $\supp\mathbb{P}_0$.
 \end{remark}

\noindent
For homogeneous hierarchical structures, we will focus on the special case where the decay rate of $(p_s)_{s\in\mathbb{N}}$ is linked to the degree $n$ of the structure.

\begin{definition}
	\label{d-s-def}
	Consider a homogeneous hierarchical structure of degree $n\geq 2$. 
	Suppose that there exist constants $C_1, C_2>0$ and $\rho>1$ such that
	\begin{equation}
		\label{algdec}
		C_1\rho^{-r}\leq p_r \leq C_2\rho^{-r} 
	\end{equation}
	for all $r\in\mathbb{N}$ large enough. Then the \emph{spectral dimension} of this model is defined as
	\begin{equation}
		\label{spectrdim}
		d_s\equiv d_s(n, \rho):= 2 \;\frac{\ln n}{\ln \rho}\,.
	\end{equation}
\end{definition}

\medskip
In other words, this amounts to $ C_{1} n^{-2r/d_s} \le p_r \le C_{2} n^{-2r/d_s}$ for large $r$.
One motivation for the definition of $d_s$ will be given by Lemma \ref{vanHove} below. To this end 
we introduce the \emph{integrated density of states} 
$\N_{0}: \mathbb{R} \to [0,1]$ of $\Delta$, which is defined by 
\begin{equation}
 	E \mapsto \N_0(E) := \langle\delta_{x_0},\Chi_{]-\infty, E]}(\Delta) \delta_{x_0} \rangle
\end{equation}
for some $x_0 \in\X$. Here, $\Chi_B$ stands for the indicator function of a set $B$, $\langle\cdot,\cdot\rangle$ denotes the canonical scalar product of the Hilbert space $\ell^2(\X)$ and $\delta_{x}$ the canonical basis vector associated with $x\in \X$, i.e.\ $\delta_{x}(y) =0$ for every $y\in\X\setminus\{x\}$ and $\delta_{x}(x)=1$. We remark that $\N_0$ is a right-continuous distribution function, which is normalised according to $\N_{0}(1)=1$. Moreover, it does not depend on the choice of $x_0$. This can be seen from the more explicit expression
\begin{equation}
	\label{explicit}
 	N_{0}(E) =\sum_{r\in\mathbb{N}_{0}:\: \lambda_{r} \le E} \bigg( \frac{1}{|Q_{r}|} - 
 	\frac{1}{|Q_{r+1}|} \bigg) = 1- \frac{1}{|Q_{r(E) + 1}|},
\end{equation}
which follows from the spectral representation \eqref{Decomp}. The second equality in \eqref{explicit} makes only sense for $E\in [0,1[$ with $r(E) := \max\{ r' \in \mathbb{N}_{0}: \lambda_{r'} \le E\}$.

\begin{lemma}
	\label{vanHove}
	Let $\Delta$ be the hierarchical Laplacian of a homogeneous hierarchical structure with spectral dimension $d_s$. Then the integrated density of states $N_{0}$ of $\Delta$ exhibits the upper-edge asymptotics
	\begin{equation} 
		\lim_{E\searrow 0}\frac{\ln\left[ 1 - \N_{0}(1-E)  \right]}{\ln E}= \frac{d_s}{2}\,.
	\end{equation} 
\end{lemma}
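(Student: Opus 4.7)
The plan is to read off the asymptotics directly from the explicit formula \eqref{explicit}, using the hypothesis \eqref{algdec} on $(p_{s})$ to translate it into a growth statement for the index $r(1-E)$.

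First, I would rewrite \eqref{explicit} in the homogeneous case. Since $|Q_r| = n^r$, one has for every $E\in[0,1[$
\begin{equation*}
  1 - \N_{0}(1-E) = n^{-(r(1-E)+1)},
\end{equation*}
so that $\ln[1-\N_{0}(1-E)] = -(r(1-E)+1)\ln n$. The whole question therefore reduces to controlling $r(1-E)$ for small $E$.

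Next I would exploit \eqref{algdec}. Summing the geometric-type bounds over $s>r$ gives constants $c_{1},c_{2}>0$ such that
\begin{equation*}
  c_{1}\rho^{-r} \;\le\; 1-\lambda_{r} \;=\; \sum_{s=r+1}^{\infty} p_{s} \;\le\; c_{2}\rho^{-r}
\end{equation*}
for all $r$ large enough. By definition $r(1-E)$ is the largest index with $1-\lambda_{r(1-E)}\ge E$. The upper bound above forces $c_{2}\rho^{-r(1-E)}\ge E$, hence $r(1-E)\le \ln(c_{2}/E)/\ln\rho$, while the lower bound shows that any $r'$ with $c_{1}\rho^{-r'}\ge E$ already satisfies $1-\lambda_{r'}\ge E$, giving $r(1-E)\ge \lfloor \ln(c_{1}/E)/\ln\rho\rfloor$. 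In short,
\begin{equation*}
  r(1-E) = \frac{\ln(1/E)}{\ln\rho} + O(1) \qquad\text{as } E\searrow 0.
\end{equation*}

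Combining the two displays and dividing by $\ln E$ yields
\begin{equation*}
  \frac{\ln[1-\N_{0}(1-E)]}{\ln E}
  = \frac{\ln n}{\ln\rho} + O\!\left(\frac{1}{\ln(1/E)}\right)
  \;\xrightarrow[E\searrow 0]{}\; \frac{\ln n}{\ln\rho} = \frac{d_{s}}{2},
\end{equation*}
which is the claim. There is no genuine obstacle here: the spectral decomposition \eqref{Decomp} reduces everything to counting eigenvalues, and the assumption \eqref{algdec} on $(p_{s})$ directly pins down how fast $\lambda_{r}$ approaches $\lambda_{\infty}=1$. The only minor point requiring care is that $r(1-E)$ is a step function, so one has to keep the $O(1)$ error and note that it disappears after dividing by the diverging $\ln E$.
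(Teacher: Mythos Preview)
Your argument is correct. The paper does not actually prove Lemma~\ref{vanHove}; it simply refers to \cite[Prop.~1.3]{K1} and \cite{M}. Your self-contained derivation from \eqref{explicit} and \eqref{algdec} is exactly the natural one and is essentially what those references do as well.
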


\medskip
\noindent
We refer to, e.g., \cite[Prop.~1.3]{K1} or \cite{M} for a proof.

Next, we turn to the central quantity of this paper. The \emph{integrated density of states} $\N: \mathbb{R} \to [0,1]$ of the \HAM{} is defined by
\begin{equation}
	\label{dosmeasure}
	E \mapsto \N(E) :=\mathbb{E}\bigl[ \langle\delta_{x_0}, \Chi_{]-\infty, E]} 
	(\Ha^\omega)\delta_{x_0} \rangle\bigr],
\end{equation}
where $\mathbb{E}$ denotes the probabilistic expectation associated with $\mathbb{P}$. 
The integrated density of states is a right-continuous distribution function and independent of the choice of $x_0\in\X$.
The set of growth points of $\N$ coincides $\mathbb{P}$-a.s.\ with the compact deterministic spectrum described in \eqref{spectrum}. At the lower spectral edge $\inf \supp\mathbb{P}_0$ the asymptotics of $\N$ is solely determined by the single-site distribution $\mathbb{P}_0$ of the potential, because the second-lowest eigenvalue $\lambda_1=p_1>0$ of the \hl{} is separated from $\lambda_0=0$ by a gap. The interesting case is the upper spectral edge $1+ \sup\supp \mathbb{P}_{0}$, because the eigenvalues of the \hl{} accumulate at $\sup\spec\Delta =1$. The main result of the present paper, Theorem~\ref{lif-thm}, concerns
this case.

\begin{theorem}
	\label{lif-thm} 
	Let $N$ be the integrated density of states of a homogeneous \HAM{} with the properties \eqref{Ass1}, \eqref{Ass2}  and \eqref{algdec}. Then, $N$ has a Lifshits tail at the upper spectral edge $1+v_{+}$ in 
	the sense that
	\begin{equation}
		\label{LTeq} 
		\lim_{E\searrow 0} \frac{ \ln \big|\ln \big[1 - \N(1+v_{+}-E) \big]\big|}{\ln E}= - \frac{d_s}{2}\,.
	\end{equation}
	We note that $\N(1+v_{+}) =1$ and recall that $d_s$ denotes the spectral dimension \eqref{spectrdim} of the homogeneous hierarchical model.
\end{theorem}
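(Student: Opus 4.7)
The plan is to derive matching upper and lower bounds for the limit in \eqref{LTeq} by adapting the Kirsch--Martinelli--Simon strategy to the hierarchical setting: I apply the Dirichlet--Neumann bracketing of Sect.~\ref{Sandwich} on clusters of rank $r=r(E)$ whose spectral gap $1-\lambda_{r-1}=\sum_{s\ge r}p_{s}$ is tuned to be of the same order as $E$. By \eqref{algdec}, this choice dictates $|Q_{r}|\asymp E^{-d_{s}/2}$, the spatial scale relevant for the Lifshits tail.

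For the lower bound, bracketing gives $1-\N(1+v_{+}-E)\ge 1-\N^{\mathrm{D}}_{r}(1+v_{+}-E)$ in terms of the Dirichlet finite-volume integrated density of states. As a trial vector I take $\psi=(\one_{A}-\one_{B})/\sqrt{2|Q_{r-1}|}$, where $A,B$ are two distinct rank-$(r-1)$ sub-clusters of $Q_{r}$; by the basic spectral theorem for the \hl{}, this $\psi$ is an eigenvector of the Dirichlet hierarchical Laplacian with eigenvalue $\lambda_{r-1}$. Its Rayleigh quotient against $\Ha^{\omega}$ exceeds $1+v_{+}-E$ whenever $\omega_{x}>v_{+}-\tfrac{E}{2}$ for every $x\in A\cup B$; by the product structure of $\mathbb{P}$ and \eqref{Ass2}, this event has probability at least $(C(E/2)^{\mu})^{2|Q_{r-1}|}=\exp\bigl(-\mathcal{O}(E^{-d_{s}/2}|\ln E|)\bigr)$. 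Dividing by $|Q_{r}|$ and taking iterated logarithms converts this into the $\liminf$ half of \eqref{LTeq}.

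For the upper bound, bracketing gives $1-\N(1+v_{+}-E)\le 1-\N^{\mathrm{N}}_{r}(1+v_{+}-E)$. It is convenient to reflect around $1+v_{+}$: writing $A^{\omega}_{r}:=(1+v_{+})\one-\Ha^{\omega}_{\mathrm{N},r}=K_{r}+W^{\omega}$ with $K_{r}:=\one-\Delta^{\mathrm{N}}_{r}\ge 0$ and $W^{\omega}:=v_{+}-V^{\omega}\ge 0$, the number of eigenvalues of $\Ha^{\omega}_{\mathrm{N},r}$ exceeding $1+v_{+}-E$ equals the number of eigenvalues of $A^{\omega}_{r}$ lying below $E$, reducing the task to a standard bottom-of-the-spectrum Lifshits analysis for the non-negative operator $A^{\omega}_{r}$. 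The free part $K_{r}$ has the constant $\one_{Q_{r}}/\sqrt{|Q_{r}|}$ as ground state with eigenvalue $0$ and next eigenvalue $1-\lambda_{r-1}$; tuning $r$ so that $1-\lambda_{r-1}>E$ therefore forces at most one eigenvalue of $A^{\omega}_{r}$ to drop below $E$, whence $1-\N^{\mathrm{N}}_{r}(1+v_{+}-E)\le |Q_{r}|^{-1}\mathbb{P}\bigl[E_{\min}(A^{\omega}_{r})<E\bigr]$.

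The main obstacle is to show $\mathbb{P}[E_{\min}(A^{\omega}_{r})<E]\le\exp(-cE^{-d_{s}/2})$ for a positive constant $c$ independent of $E$. The intuition is that any unit vector $\psi$ satisfying $\langle\psi,A^{\omega}_{r}\psi\rangle<E$ must be close to the constant mode (by the spectral gap of $K_{r}$) and must simultaneously concentrate its $|\psi|^{2}$-mass on sites where $W^{\omega}$ is small (by Markov's inequality); combining these two constraints through a Temple-type lower bound on $E_{\min}(A^{\omega}_{r})$ shows that $\{E_{\min}(A^{\omega}_{r})<E\}$ essentially enforces $\bar W_{Q_{r}}:=|Q_{r}|^{-1}\sum_{x\in Q_{r}}W^{\omega}(x)=\mathcal{O}(E)$. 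Since \eqref{Ass1} applied at $v=v_{+}$ implies $\mathbb{E}[W^{\omega}]>0$, a Cram\'er-type large-deviation bound then gives $\mathbb{P}[\bar W_{Q_{r}}<\mathcal{O}(E)]\le\exp(-c'|Q_{r}|)$ for all sufficiently small $E$; together with $|Q_{r}|\asymp E^{-d_{s}/2}$ this yields the $\limsup$ half of \eqref{LTeq} and completes the matching at the rate $-d_{s}/2$.
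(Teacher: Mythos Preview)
You have reversed the Dirichlet--Neumann bracketing. In the paper's conventions $\Delta_{\mathrm{D},r}=\Delta_{\mathrm{N},r}+(1-\lambda_{r})\one\ge\Delta_{\mathrm{N},r}$, so Lemma~\ref{sandwich} reads $\mathbb{E}[\N^{\omega}_{\mathrm{D},r}(E)]\le \N(E)\le\mathbb{E}[\N^{\omega}_{\mathrm{N},r}(E)]$; hence $1-\N(1+v_{+}-E)\ge 1-\mathbb{E}[\N^{\omega}_{\mathrm{N},r}(1+v_{+}-E)]$ and $1-\N(1+v_{+}-E)\le 1-\mathbb{E}[\N^{\omega}_{\mathrm{D},r}(1+v_{+}-E)]$. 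Both of your bracketing steps therefore go the wrong way, and neither bound follows as written.

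The swap also contaminates your spectral claims. Your operator $K_{r}=\one-\Delta_{\mathrm{N},r}$ has smallest eigenvalue $1-\lambda_{r}>0$, not $0$; it is $\one-\Delta_{\mathrm{D},r}$ whose ground state is the constant with eigenvalue $0$, and its spectral gap is $p_{r}$, not $1-\lambda_{r-1}$. Likewise, your trial vector $\psi=(\one_{A}-\one_{B})/\sqrt{2|Q_{r-1}|}$ has eigenvalue $1-p_{r}$ under $\Delta_{\mathrm{D},r}$, not $\lambda_{r-1}$; it is under $\Delta_{\mathrm{N},r}$ that the eigenvalue is $\lambda_{r-1}$. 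If you interchange $\mathrm{D}$ and $\mathrm{N}$ throughout, the argument becomes essentially the paper's: the lower bound proceeds via a Rayleigh-quotient trial on the Neumann box (for which the constant $\psi_{0}=|Q_{r}|^{-1/2}$ with eigenvalue $\lambda_{r}$ is both simpler and sharper than your sign-changing $\psi$), and the upper bound proceeds via a Temple-type estimate on the Dirichlet box (the paper does this directly on $\Had$ with a truncated auxiliary potential $V_{\kappa}^{\omega}=\max\{\omega_{x},-p_{\kappa}/3\}$ to force the Temple hypothesis, which is exactly the device your sketch is missing when it says ``Temple-type lower bound on $E_{\min}(A^{\omega}_{r})$'' without saying how condition~\eqref{precond} is arranged). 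With those corrections the two approaches coincide.
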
 

\noindent
The proof of Theorem~\ref{lif-thm} can be found in Sect.~\ref{LT}. 

\begin{remark} 
	\label{strategy}
	The intuition behind Theorem~\ref{lif-thm} is that eigenvalues $1+v_{+}-E$ close to the upper edge of the spectrum, i.e.\ for $E \ll 1$,  must maximise both the kinetic and the potential energy. In order to achieve a large kinetic energy of the order $1-E =: 1 - \sum_{r= \kappa(E)}^{\infty} p_{r}$, the associated  eigenfunction must be approximately constant over a cluster of high rank $\kappa(E) \sim - d_{s} (\ln E)/(2 \ln n) \gg 1$, that is, with volume $|Q_{\kappa(E)}| = n^{\kappa(E)} \sim E^{-d_{s}/2}$. To ensure a large potential energy at the same time then requires most coupling constants of this cluster to take values close to $v_{+}=\sup\supp \mathbb{P}_{0}$. This is a large-deviation event with approximate probability $\exp\{- |Q_{\kappa(E)}|\} \sim \exp\{ -E^{-d_{s}/2}\}$ that sets the scale for $N(1+v_{+}-E)$.		We refer to the end of Sect.~\ref{intro} for further comments on the fact that the Lifshits exponent equals $d_{s}/2$. 
\end{remark}


\section{Dirichlet-Neumann bracketing}\label{Sandwich}

Technically, many proofs of \LT{} rely on Dirichlet-Neumann bracketing. This allows for a two-sided estimate of the integrated density of states in terms of finite-volume operators. We will also follow this route.
Thus, it is a main point of this paper to find a pair of suitable finite-volume restrictions of the hierarchical Laplacian for which Dirichlet-Neumann bracketing works. 

\begin{definition}
	For $x_{0} \in\X$ fixed and $\kappa \in\mathbb{N}_{0}$ we consider the finite cluster $Q_{\kappa} \equiv Q_\kappa(x_0)$ and introduce the \emph{Neumann}, resp.\ \emph{Dirichlet finite-volume restrictions}
	\begin{equation}
		\Delta_{\mathrm{N}, Q_{\kappa}} :=  \sum_{s=1}^\kappa\,p_s \E_s\Big|_{\ell^2(\Q)} , \qquad
		\Delta_{\mathrm{D}, Q_{\kappa}} :=  \Delta_{\mathrm{N}, Q_{\kappa}} +  \sum_{s=\kappa+1}^\infty\, p_s \, \one\Big|_{\ell^2(\Q)}
	\end{equation}
	of the hierarchical Laplacian to the finite-dimensional subspace $\ell^2(Q_\kappa)$. For $\mathrm{X} \in \{\mathrm{N}, \mathrm{D}\}$ we then set
	\begin{equation} 
		\Ha_{\mathrm{X},Q_{\kappa}}^{\omega} :=\Delta_{\mathrm{X}, Q_{\kappa}} +V^\omega.
	\end{equation}
To simplify notation we write $\Delta_{\mathrm{X}, \kappa} \equiv \Delta_{\mathrm{X}, Q_{\kappa}}$ and  
$\Ha_{\mathrm{X},\kappa}^{\omega} \equiv \Ha_{\mathrm{X},Q_{\kappa}}^{\omega}$, if there is no danger of confusion.
\end{definition}

\noindent
The desired property is stated in 

\begin{lemma}[Dirichlet-Neumann decoupling]
	\label{DNcoupl}
	Consider a fixed finite cluster $Q_{\kappa}$ of rank $\kappa\in\mathbb{N}$. Let $r \in\mathbb{N}_{0}$, $r 
	< \kappa$, and assume that the cluster $Q_{\kappa}$ is the union of $m$ disjoint clusters $Q_r^1, 
	\ldots, Q_r^m$ of lower rank $r$.
	Writing $\Ha_{\Xr}^{\omega,j} \equiv \Ha_{\mathrm{X}, \smash{Q_{r}^{j}}}^{\omega\phantom{j}}$ for $\mathrm{X}\in\{\mathrm{N},\mathrm{D}\}$, we have in the 
	sense of quadratic forms
	\begin{equation}
		\Han \geq \bigoplus_{j=1}^m \Ha_{\Nr}^{\omega,j} \quad \text{and} \quad 
		\Had \leq \bigoplus_{j=1}^m \Ha_{\Dr}^{\omega,j}\,.
	\end{equation}
\end{lemma}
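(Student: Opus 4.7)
My plan is to treat the kinetic and potential parts of $\Ha^\omega_{\mathrm{X},\kappa}$ separately. Since $V^\omega$ is diagonal, it automatically respects the orthogonal decomposition $\ell^2(Q_\kappa) = \bigoplus_{j=1}^m \ell^2(Q_r^j)$, so I will reduce both inequalities to the corresponding statements for the truncated hierarchical Laplacians. The natural strategy is then to split the sum defining $\Delta_{\mathrm{N},\kappa}$ at the threshold rank $r$, separating ``internal'' scales $s \leq r$ from ``external'' scales $r < s \leq \kappa$.

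The key structural observation I will rely on is that, by (H2), every cluster of rank $s \leq r$ is contained in a unique $Q_r^j$, whence
\begin{equation*}
\sum_{s=1}^{r} p_s E_s\Big|_{\ell^2(Q_\kappa)} \;=\; \bigoplus_{j=1}^m \Delta_{\mathrm{N},r,Q_r^j}.
\end{equation*}
For the external scales I will use that each $E_s$ with $s \leq \kappa$ is an orthogonal projection on $\ell^2(Q_\kappa)$: it is self-adjoint and satisfies $E_s^2 = E_s$, since averaging a function that is already constant on every rank-$s$ cluster leaves it unchanged. Consequently $0 \leq E_s \leq \mathbf{1}$ on $\ell^2(Q_\kappa)$.

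For the Neumann bound I will write
\begin{equation*}
\Delta_{\mathrm{N},\kappa} \;=\; \bigoplus_{j=1}^m \Delta_{\mathrm{N},r,Q_r^j} \;+\; \sum_{s=r+1}^{\kappa} p_s E_s\Big|_{\ell^2(Q_\kappa)}
\end{equation*}
and simply discard the nonnegative second summand. For the Dirichlet bound I will instead estimate the same mixing terms from above by $\mathbf{1}$, yielding
\begin{equation*}
\Delta_{\mathrm{D},\kappa} \;\leq\; \bigoplus_{j=1}^m \Delta_{\mathrm{N},r,Q_r^j} \;+\; \sum_{s=r+1}^{\infty} p_s \, \mathbf{1}|_{\ell^2(Q_\kappa)},
\end{equation*}
where the new bound on $\sum_{s=r+1}^{\kappa} p_s E_s$ merges with the Dirichlet tail $\sum_{s=\kappa+1}^{\infty} p_s \mathbf{1}$ already built into $\Delta_{\mathrm{D},\kappa}$. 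The identity $\mathbf{1}|_{\ell^2(Q_\kappa)} = \bigoplus_j \mathbf{1}|_{\ell^2(Q_r^j)}$ then recombines the right-hand side into $\bigoplus_j \Delta_{\mathrm{D},r,Q_r^j}$, and adding $V^\omega$ to both sides of each inequality finishes the proof.

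I do not anticipate a real obstacle: the only nontrivial ingredient is recognizing each $E_s$ as an orthogonal projection on $\ell^2(Q_\kappa)$; once that is in place, both bounds amount to one line of linear algebra on top of the scale decomposition $s \leq r$ vs.\ $s > r$.
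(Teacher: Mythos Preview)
Your proposal is correct and follows essentially the same route as the paper: split the sum $\sum_{s=1}^{\kappa} p_s \E_s$ at rank $r$, identify the low-rank part with $\bigoplus_j \Delta_{\mathrm{N},r,Q_r^j}$ via the invariance of each $\ell^2(Q_r^j)$ under $\E_s$ for $s\le r$, and then use $0\le \E_s\le\one$ on the remaining terms. The only cosmetic difference is that for the Dirichlet case the paper writes the identity $\Had = \bigoplus_j \Ha_{\Dr}^{\omega,j} - \sum_{s=r+1}^{\kappa} p_s(\one-\E_s)$ and drops the manifestly nonpositive correction, whereas you bound $\E_s\le\one$ first and then merge the tails; these are the same computation rearranged.
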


\begin{proof}
The subspace $\ell^2(Q_r^j)$ is left invariant under $\E_{s} \big|_{\ell^{2}(Q_{\kappa})}$ for every $s \in \{1, \ldots, r\}$ and every $j\in \{1,\ldots, m\}$. Thus, we have
\begin{align} 
	\Han &= \biggl( \bigoplus_{j=1}^m \Ha_{\Nr}^{\omega, j} \biggr) + \sum_{s=r+1}^\kappa \, 
	p_s\E_s \Big|_{\ell^{2}(Q_{\kappa})}  
	\geq \bigoplus_{j=1}^m \Ha_{\Nr}^{\omega, j} \\
\intertext{and}
\Had &= \biggl( \bigoplus_{j=1}^m \Ha_{\Dr}^{\omega, j} \biggr) -  
	\sum_{s=r+1}^\kappa\,p_s \big(\one - \E_s\big) \Big|_{\ell^{2}(Q_{\kappa})}
\, \leq \, \bigoplus_{j=1}^m \Ha_{\Dr}^{\omega,j}.
\end{align} 
\end{proof}

\noindent
Being an operator on a finite dimensional space, $\Ha_{\Xk}^{\omega}$ has discrete (random) eigenvalues
\begin{equation} 
	\label{e-not}
	e_{\Xk}^{\omega}(1) \leq e_{\Xk}^{\omega}(2) \leq \ldots \leq 
	e_{\Xk}^{\omega}(|\Q|),
\end{equation} 
which are counted according to their multiplicities.
We define the corresponding normalised eigenvalue counting function $\N_{\Xk}^{\omega}
\colon\mathbb{R}\to [0,1]$ by
\begin{equation} 
	\label{rdos}
	E \mapsto \N_{\Xk}^{\omega}(E):= \frac{1}{|Q_\kappa|} \sum_{j=1}^{|\Q|}
	\Chi_{]-\infty, E]} \big(e_{\Xk}^{\omega}(j)\big)\,, \quad \mathrm{X} \in\{\mathrm{N},\mathrm{D}\}.
\end{equation}
In the macroscopic limit this quantity is self-averaging and justifies the interpretation of $\N$ as an integrated density of states.
 
\begin{lemma}
	\label{conv}
	For $\mathrm{X}\in\{\mathrm{N},\mathrm{D}\}$ there exists a set $\Omega_{0} \subseteq \Omega$ of full probability, $\PP[\Omega_{0}] =1$, such that for every $\omega \in\Omega_{0}$ we have 
	\begin{equation}
		\label{convN}
		\lim_{\kappa\to\infty} \N_{\Xk}^{\omega}(E) = \N(E) 
	\end{equation}
	at each continuity point $E$ of $N$. 
\end{lemma}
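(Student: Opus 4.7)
The plan is to combine Birkhoff's pointwise ergodic theorem with the Dirichlet--Neumann decoupling of Lemma~\ref{DNcoupl}; the ergodic $\X$-shifts supplied by the appendix (Lemma~\ref{ergodic-shifts}) provide the probabilistic input. Since $\N$ is monotone and right-continuous, it has a countable dense set $D\subseteq \mathbb{R}$ of continuity points. The strategy is to first establish \eqref{convN} on a single set $\Omega_{0}$ of full probability for every $E\in D$, and then extend to arbitrary continuity points of $\N$ by monotonicity of the finite-volume distribution functions $\N_{\Xk}^{\omega}$.

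For fixed $r\in\mathbb{N}_{0}$, $\mathrm{X}\in\{\mathrm{N},\mathrm{D}\}$ and $E\in D$, the family $\bigl(\N_{\Xr}^{\omega,j}(E)\bigr)_{j=1,\ldots,m_{\kappa}}$, with $m_{\kappa}:=|\Q|/|Q_{r}|$ and $Q_{r}^{1},\ldots,Q_{r}^{m_{\kappa}}$ the rank-$r$ subclusters of $\Q$, is obtained from the single random variable $\N_{\Xr}^{\omega}(E)$ by the measure-preserving ergodic shifts that cycle through the rank-$r$ clusters of $\X$. Birkhoff's pointwise ergodic theorem therefore yields
\[
	\lim_{\kappa\to\infty}\frac{1}{m_{\kappa}}\sum_{j=1}^{m_{\kappa}}\N_{\Xr}^{\omega,j}(E) = \mathbb{E}\bigl[\N_{\Xr}^{\omega}(E)\bigr]
\]
on a set of full probability that a priori depends on $(r,\mathrm{X},E)$; a countable intersection over $(r,\mathrm{X},E)\in\mathbb{N}_{0}\times\{\mathrm{N},\mathrm{D}\}\times D$ produces a single set $\Omega_{0}$ on which all these limits hold simultaneously.

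Combining Lemma~\ref{DNcoupl} with the elementary form inequality $\Han\leq\Had$ (whose difference is the non-negative scalar $(\sum_{s>\kappa}p_{s})\one$), the min--max principle translates these form inequalities into reverse inequalities for the normalised counting functions and produces the sandwich
\[
	\frac{1}{m_{\kappa}}\sum_{j=1}^{m_{\kappa}}\N_{\Dr}^{\omega,j}(E) \leq \N_{\Dk}^{\omega}(E) \leq \N_{\Nk}^{\omega}(E) \leq \frac{1}{m_{\kappa}}\sum_{j=1}^{m_{\kappa}}\N_{\Nr}^{\omega,j}(E),
\]
valid for every $\omega\in\Omega_{0}$, $E\in D$ and $r<\kappa$. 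Sending $\kappa\to\infty$ at fixed $r$ via the ergodic averages above then gives
\[
	\mathbb{E}\bigl[\N_{\Dr}^{\omega}(E)\bigr] \,\leq\, \liminf_{\kappa}\N_{\Xk}^{\omega}(E) \,\leq\, \limsup_{\kappa}\N_{\Xk}^{\omega}(E) \,\leq\, \mathbb{E}\bigl[\N_{\Nr}^{\omega}(E)\bigr]
\]
for both choices of $\mathrm{X}\in\{\mathrm{N},\mathrm{D}\}$.

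The main remaining obstacle is to identify the common $r\to\infty$ limit of the two bracketing expectations with $\N(E)$. Since $\Hard-\Harn=\epsilon_{r}\one$ with $\epsilon_{r}:=\sum_{s>r}p_{s}\to 0$, one has $\N_{\Dr}^{\omega}(E)=\N_{\Nr}^{\omega}(E-\epsilon_{r})$, so the lower and upper bounds close at continuity points of their common limit. By the shift-invariance from the appendix, $\mathbb{E}[\N_{\Nr}^{\omega}(E)]=\mathbb{E}[\langle\delta_{x_{0}},\Chi_{]-\infty,E]}(\Harn)\delta_{x_{0}}\rangle]$ for any $x_{0}\in Q_{r}$. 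Extending $\Harn$ block-diagonally with $P_{\X\setminus Q_{r}}\Ha^{\omega}P_{\X\setminus Q_{r}}$ to a bounded self-adjoint operator $\widehat{\Ha}_{r}^{\omega}$ on $\ell^{2}(\X)$, a direct matrix-element computation shows $\widehat{\Ha}_{r}^{\omega}\to\Ha^{\omega}$ strongly as $r\to\infty$. Since $E$ is $\PP$-a.s.\ a continuity point of the spectral measure at $\delta_{x_{0}}$ (because it is a continuity point of its average $\N$), the weak convergence of spectral measures together with bounded convergence yields $\mathbb{E}[\N_{\Nr}^{\omega}(E)]\to\N(E)$. Combined with the sandwich this proves \eqref{convN} for every $E\in D$ on $\Omega_{0}$, and monotonicity of the $\N_{\Xk}^{\omega}$ extends the convergence to all continuity points of $\N$.
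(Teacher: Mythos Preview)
Your argument is correct, but it takes a different and longer route than the paper's. The paper avoids Lemma~\ref{DNcoupl} altogether: it introduces the \emph{unrestricted} truncated operator $K^{\omega}_{\kappa}=\sum_{s=1}^{\kappa}p_{s}\E_{s}+V^{\omega}$ on $\ell^{2}(\X)$ (resp.\ $K^{\omega}_{\kappa}+\epsilon_{\kappa}\one$ in the Dirichlet case), observes that $K^{\omega}_{\kappa}\to H^{\omega}$ in operator norm, and uses the key fact that $\ell^{2}(Q_{\kappa})$ is an \emph{invariant subspace} of $K^{\omega}_{\kappa}$, so that $\N^{\omega}_{\Xk}(E)=|Q_{\kappa}|^{-1}\tr_{\X}\bigl[\Chi_{Q_{\kappa}}\Chi_{]-\infty,E]}(K^{\omega}_{\kappa})\bigr]$. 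Ergodicity (Lemma~\ref{birk}) applied once to $h(\omega)=\langle\delta_{x_{0}},\varphi(H^{\omega})\delta_{x_{0}}\rangle$, combined with norm convergence, then yields the result directly.

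Your approach is the classical two-step sandwich argument familiar from the lattice Anderson model: bracket $\N^{\omega}_{\Xk}$ between ergodic averages of rank-$r$ counting functions via Lemma~\ref{DNcoupl}, send $\kappa\to\infty$ to obtain $\mathbb{E}[\N^{\omega}_{\Dr}(E)]\le\liminf\le\limsup\le\mathbb{E}[\N^{\omega}_{\Nr}(E)]$, and then identify the common $r\to\infty$ limit with $\N(E)$ by a separate strong-convergence argument. This is sound (your block-diagonal extension $\widehat{H}^{\omega}_{r}$ does converge strongly, and the ``a.s.\ continuity point'' step is justified since $\mathrm{d}N(\{E\})=\mathbb{E}[\mu^{\omega}_{x_{0}}(\{E\})]=0$ forces $\mu^{\omega}_{x_{0}}(\{E\})=0$ a.s.). The advantage of your route is that it is the template that works in models without invariant-subspace structure; the paper's route buys brevity by exploiting precisely that structure of the hierarchical Laplacian, and in fact renders your strong-convergence step unnecessary because norm convergence of $K^{\omega}_{\kappa}$ is immediate.
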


\begin{remark}
  The above lemma extends Thm.~3.3 in \cite{K3} to the Dirichlet case.
  Whereas we rely on ergodicity, the (longer) argument in \cite{K3} follows a different route which is based on the law of large numbers instead. 
\end{remark}

%

\begin{proof}[Proof of Lemma~\ref{conv}]
\emph{Case $\mathrm{X} =\mathrm{N}$.} 
Given $Q \subseteq \X$, we write $\tr_{Q}$ for the trace over $\ell^{2}(Q)$ and identify a function on $Q$ with the corresponding multiplication operator by this function on $\ell^{2}(Q)$. The covariance and ergodicity of the \HAM{}, Lemmas~\ref{covariant} 
and~\ref{birk}, yield  for every continuous function $\varphi \in C_{c}(\mathbb{R})$ with compact support
\begin{align}
 	\frac{1}{|\Q|} \tr_{\X} \big[ \Chi_{\Q(x_{0})} \, \varphi (\Ha^{\omega})\big] &= 
		\frac{1}{|\Q|} \sum_{x\in\Q(x_{0})} \big\langle \delta_{x_{0}},  
		\varphi \big(\Ha^{\tau_{x}(\omega)}\big)\delta_{x_{0}}\big\rangle \nonumber\\
		&\stackrel{\kappa\to\infty}{\relbar\joinrel\longrightarrow} \; \int_{\Omega} \mathrm{d}\kern1pt\PP(\omega') \; 
			 \langle \delta_{x_{0}},  \varphi (\Ha^{\omega'})\delta_{x_{0}}\rangle 
\end{align}
for $\PP$-a.e.\ $\omega\in\Omega$. Next, we introduce the unrestricted but truncated operator $K^{\omega}_{\kappa} :=\sum_{s=1}^{\kappa} p_{s} \E_{s} + V^{\omega}$ on $\ell^{2}(\X)$. 
Since $K^{\omega}_{\kappa}$ converges to $H^{\omega}$ in operator norm as $\kappa\to\infty$ for $\PP$-a.e.\ $\omega\in\Omega$, we conclude that also 
\begin{equation}
 	\frac{1}{|\Q|} \tr_{\X} \big[ \Chi_{\Q(x_{0})} \, \varphi (K_{\kappa}^{\omega})\big] \stackrel{\kappa\to\infty}{\relbar\joinrel\longrightarrow} \; \int_{\Omega} \mathrm{d}\kern1pt\PP(\omega') \; 
			\langle \delta_{x_{0}},  \varphi (\Ha^{\omega'})\delta_{x_{0}}\rangle
\end{equation}
for $\PP$-a.e.\ $\omega\in\Omega$ and every given $\varphi \in C_{c}(\mathbb{R})$. This implies $\mathbb{P}$-a.s.\ vague convergence of the corresponding probability measures on $\mathbb{R}$, see e.g.\ the proof of Thm.~5.5 in \cite{Kirsch}. Hence, there exists a set $\Omega_{0} \subseteq \Omega$ of full probability, $\PP[\Omega_{0}] =1$, such that for every $\omega \in\Omega_{0}$ the corresponding distribution functions converge
\begin{equation}
	\label{relate}
	 \N_{\Nk}^{\omega}(E) = \frac{1}{|\Q|} \tr_{\X} 	\big[ \Chi_{\Q(x_{0})} \, 
		  \Chi_{]-\infty, E]} (K^{\omega}_{\kappa})\big] 
		\stackrel{\kappa\to\infty}{\relbar\joinrel\longrightarrow} \; N(E)
\end{equation}
for every continuity point $E\in\mathbb{R}$ of $N$. 
We note that the left equality above relies on $\ell^{2}(\Q(x_{0}))$ being an invariant subspace of $K_{\kappa}^{\omega}$.

\emph{Case $\mathrm{X} =\mathrm{D}$.} 
We repeat the argument of the Neumann case with 
\begin{equation}
	K^{\omega}_{\kappa} :=\sum_{s=1}^{\kappa} p_{s} \E_{s} +  \sum_{s=\kappa+1}^{\infty} p_{s}\, \one 
	+ V^{\omega} 
\end{equation}
acting on $\ell^{2}(\X)$.
\end{proof}

\begin{lemma}[Dirichlet-Neumann bracketing]
	\label{sandwich}
	For every cluster $Q_r$ of rank $r\in\mathbb{N}_0$ and for every $E\in\mathbb{R}$ the integrated density of states $\N$ obeys the two-sided estimate
	\begin{equation} 
		\label{sandwich-eq}
		\mathbb{E}\bigl[\N^{\omega}_{\Dr}(E)\bigr] \leq \N(E) \leq 
		\mathbb{E}\bigl[\N^{\omega}_{\Nr}(E)\bigr] .
	\end{equation}
\end{lemma}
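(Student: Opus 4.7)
The plan is to reduce the statement to the Dirichlet--Neumann decoupling of Lemma~\ref{DNcoupl} applied to a large enclosing cluster $Q_{\kappa}$ of rank $\kappa > r$, then pass to the limit $\kappa\to\infty$ via Lemma~\ref{conv}.

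First I would fix $\kappa > r$ and write $Q_{\kappa}$ as the disjoint union of its $m = |Q_{\kappa}|/|Q_{r}|$ sub-clusters $Q_{r}^{1},\ldots,Q_{r}^{m}$ of rank $r$. Lemma~\ref{DNcoupl} gives the quadratic-form sandwich
\begin{equation*}
\bigoplus_{j=1}^{m}\Ha_{\Dr}^{\omega,j} \;\geq\; \Had \qquad\text{and}\qquad
\bigoplus_{j=1}^{m}\Ha_{\Nr}^{\omega,j} \;\leq\; \Han \text{ is false;}
\end{equation*}
more precisely, Lemma~\ref{DNcoupl} gives $\Han \ge \bigoplus_j \Ha_{\Nr}^{\omega,j}$ and $\Had \le \bigoplus_j \Ha_{\Dr}^{\omega,j}$. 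By the min-max principle, the eigenvalue counting functions reverse the inequalities. Since $|Q_{\kappa}|=m|Q_{r}|$, this yields pointwise in $E$ and $\omega$
\begin{equation*}
\N_{\Dk}^{\omega}(E) \;\geq\; \frac{1}{m}\sum_{j=1}^{m} \N_{\mathrm{D},Q_{r}^{j}}^{\omega}(E), \qquad
\N_{\Nk}^{\omega}(E) \;\leq\; \frac{1}{m}\sum_{j=1}^{m} \N_{\mathrm{N},Q_{r}^{j}}^{\omega}(E).
\end{equation*}

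Next I would take $\mathbb{P}$-expectations. Because the random variables $(\omega_{x})_{x\in\X}$ are i.i.d., the distribution of $\Ha_{\mathrm{X},\smash{Q_{r}^{j}}}^{\omega}$ does not depend on $j$, so $\mathbb{E}[\N_{\mathrm{X},Q_{r}^{j}}^{\omega}(E)] = \mathbb{E}[\N_{\Xr}^{\omega}(E)]$ for every $j$. The averaged estimates become
\begin{equation*}
\mathbb{E}\bigl[\N_{\Dk}^{\omega}(E)\bigr] \;\geq\; \mathbb{E}\bigl[\N_{\Dr}^{\omega}(E)\bigr], \qquad
\mathbb{E}\bigl[\N_{\Nk}^{\omega}(E)\bigr] \;\leq\; \mathbb{E}\bigl[\N_{\Nr}^{\omega}(E)\bigr].
\end{equation*}
Now I would let $\kappa\to\infty$. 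For a continuity point $E$ of $\N$, Lemma~\ref{conv} gives $\N_{\Xk}^{\omega}(E) \to \N(E)$ for $\omega$ in a set of full measure, and since $\N_{\Xk}^{\omega}(E)\in[0,1]$, dominated convergence gives $\mathbb{E}[\N_{\Xk}^{\omega}(E)] \to \N(E)$. Thus \eqref{sandwich-eq} holds at every continuity point of $\N$.

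Finally, I would extend the bounds to all $E\in\mathbb{R}$ by a standard right-continuity argument: both $\N$ and $E\mapsto\mathbb{E}[\N_{\Xr}^{\omega}(E)]$ are right-continuous distribution functions (the latter by monotone/dominated convergence applied to the step functions $\N_{\Xr}^{\omega}$), and the set of discontinuities of $\N$ is at most countable, so one may approximate any $E$ from the right by continuity points and pass to the limit in both sides of \eqref{sandwich-eq}. The only delicate point in the argument is the direction-reversing step via min-max; everything else is a routine assembly of Lemmas~\ref{DNcoupl} and~\ref{conv}.
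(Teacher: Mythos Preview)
Your argument is correct and follows essentially the same route as the paper's own proof: decouple a large cluster $Q_{\kappa}$ via Lemma~\ref{DNcoupl}, reverse the operator inequalities to counting-function inequalities by min-max, use the i.i.d.\ structure (together with the fact that all rank-$r$ clusters carry unitarily equivalent copies of $\Delta_{\Xr}$) to make the $j$-average collapse to a single expectation, pass to the limit $\kappa\to\infty$ via Lemma~\ref{conv} and dominated convergence at continuity points, and extend by right-continuity. The only cosmetic issue is the stray ``is false'' in your first display, where both inequalities you wrote are in fact equivalent to the ones you restate immediately afterwards; you should simply delete that clause.
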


\begin{proof}
We fix $E \in\mathbb{R}$, $r\in\mathbb{N}_{0}$ and $x_{0} \in\X$. Consider a cluster $\Q\equiv\Q(x_0)$ of rank $\kappa\in\mathbb{N}$, $\kappa > r$, such that $\Q$ is the union of $m$ disjoint rank-$r$-clusters $Q_r^1, 
	\ldots, Q_r^m$ for some $m\in\mathbb{N}$.
Using Lemma \ref{DNcoupl} and the min-max principle, we conclude that
\begin{align}\label{estimate}
	\mathbb{E}\biggl[\frac{1}{|Q_\kappa|}\text{tr}_{\Q} \Chi_{]-\infty, E]} ( \Han) \biggr] 
	&\leq \mathbb{E}\biggl[ \frac{1}{m|Q_r^1|} \tr_{\Q} \biggl( \bigoplus_{j=1}^m 
	\Chi_{]-\infty, E]}  \big(\Ha_{\Nr}^{\omega,j}\big)  \biggr) \biggr] \nn\\ 
	&= \frac{1}{m|Q_r^1|} \; \sum_{j=1}^m \mathbb{E} \Bigl[  \tr_{Q_r^{j}}
		\Chi_{]-\infty, E]} \bigl( \Ha_{\Nr}^{\omega,j} \bigr)   \Bigr]  \nn \\ 
	&= \frac{1}{|Q_r^1|}  \; \mathbb{E}\Bigl[ \text{tr}_{Q_r^1} 
		\Chi_{]-\infty, E]} \bigl( \Ha_{\Nr}^{\omega,1} \bigr) \Bigr].
\end{align}
Here, the last equality relies on the $r$-cluster permutation invariance of $\Delta_{\Nr}$ and the identical distribution of the random variables. By 
dominated convergence and Lemma~\ref{conv}, the left-hand side of \eqref{estimate} converges to 
$\mathbb{E}[ \N(E)] = \N(E)$ as $\kappa\to\infty$, provided $E$ is a continuity point of $N$. In this case
we obtain $\N(E)\leq \mathbb{E}[\N_{\Nr}^{\omega}(E)]$. If $E$ happens to be a discontinuity point of $N$ -- for which we have no convergence statement in Lemma~\ref{conv} -- simply replace $E$ by a monotone decreasing sequence of continuity points $E_{l} \searrow E$ and use right-continuity. 

The lower bound follows by the same line of reasoning.
\end{proof}


\section{Proof of Theorem~\ref{lif-thm}} \label{LT}
The strategy outlined in Remark \ref{strategy} suggests to estimate the maximal eigenvalue of random operators on finite clusters.
For a general \sa{} operator, an upper bound on the maximal eigenvalue is provided by Temple's inequality, which we recall from \cite[Thm.\ XIII.5]{RS4} (with $A$ replaced by $-A$).

\begin{lemma}\label{temple}
	Let A be a \sa{} operator in a Hilbert space and let $E_{\mathrm{max}}(A) := \sup \spec A$ 
	be an isolated eigenvalue of $A$. 
	We write $E_1(A):= \sup\big\{ \spec (A) \setminus \{E_{\mathrm{max}}(A)\}\big\}$ and assume the existence of 
	a vector $\psi$ in the domain of $A$ such that $\langle\psi,\psi\rangle=1$  and
	\begin{equation}
		\label{precond} 
		\langle \psi, A\psi \rangle  >  E_1(A).
	\end{equation} 
	Then we have the estimate 
	\begin{equation} 
		\label{temple-eq}
		E_{\mathrm{max}}(A)  \leq  \langle \psi,A\psi\rangle + \frac{\langle \psi, A^2\psi\rangle
		- \langle \psi,A\psi\rangle^2}{\langle\psi, A\psi\rangle -E_1(A)}. 
	\end{equation}
\end{lemma}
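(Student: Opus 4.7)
The plan is to exploit the functional-calculus inequality $(A - E_{\max}(A))(A - E_1(A)) \geq 0$ and take its expectation value in the trial state $\psi$. This is the classical derivation of Temple's bound; the condition \eqref{precond} is precisely what is needed to keep the denominator that will arise on division positive.

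First I would observe that because $E_{\max}(A)$ is isolated from the rest of the spectrum,
\begin{equation*}
\spec(A) \subseteq (-\infty, E_1(A)] \cup \{E_{\max}(A)\}.
\end{equation*}
For every real $\lambda$ in this set the quadratic $(\lambda - E_{\max}(A))(\lambda - E_1(A))$ is nonnegative: it vanishes at $\lambda = E_{\max}(A)$, and for $\lambda \le E_1(A) \le E_{\max}(A)$ both factors are $\le 0$. By the spectral theorem (functional calculus applied to this continuous, nonnegative function on $\spec(A)$) this yields the operator inequality
\begin{equation*}
\bigl(A - E_{\max}(A)\bigr)\bigl(A - E_1(A)\bigr) \;\ge\; 0.
\end{equation*}

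Next I would evaluate the corresponding quadratic form on the unit vector $\psi$. Abbreviating $a := \langle\psi, A\psi\rangle$ and $b := \langle\psi, A^2\psi\rangle$, expansion of the product and use of $\langle\psi,\psi\rangle = 1$ give
\begin{equation*}
b \,-\, \bigl(E_{\max}(A) + E_1(A)\bigr)\, a \,+\, E_{\max}(A)\, E_1(A) \;\ge\; 0.
\end{equation*}
Grouping the terms containing $E_{\max}(A)$ produces
\begin{equation*}
E_{\max}(A)\,\bigl(a - E_1(A)\bigr) \;\le\; b - E_1(A)\, a.
\end{equation*}
Hypothesis \eqref{precond} says $a > E_1(A)$, so the factor $a - E_1(A)$ is strictly positive and I may divide by it without reversing the inequality, obtaining
\begin{equation*}
E_{\max}(A) \;\le\; \frac{b - E_1(A)\, a}{a - E_1(A)}.
\end{equation*}

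Finally I would massage the right-hand side into the form stated in \eqref{temple-eq} by adding and subtracting $a(a - E_1(A))$ in the numerator:
\begin{equation*}
\frac{b - E_1(A)\, a}{a - E_1(A)} \;=\; a \,+\, \frac{b - a^{2}}{a - E_1(A)}.
\end{equation*}
Re-substituting $a = \langle\psi, A\psi\rangle$ and $b = \langle\psi, A^{2}\psi\rangle$ yields exactly \eqref{temple-eq}. There is no real obstacle here; the only point to watch is that the strict inequality in \eqref{precond} is indispensable because it guarantees that the denominator in the last step is strictly positive, so that the division preserves the direction of the inequality.
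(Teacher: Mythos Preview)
Your argument is correct and is precisely the classical derivation of Temple's inequality via the operator inequality $(A-E_{\max}(A))(A-E_1(A))\ge 0$. The paper itself does not supply a proof of this lemma but simply cites \cite[Thm.~XIII.5]{RS4} (with $A$ replaced by $-A$); the proof there is the same one you have written. The only minor remark is that for a possibly unbounded $A$ the expression $\langle\psi,A^{2}\psi\rangle$ should be read as $\|A\psi\|^{2}=\int\lambda^{2}\,\mathrm{d}\mu_{\psi}(\lambda)$ via the spectral measure $\mu_{\psi}$, which is finite because $\psi\in\dom(A)$; your computation then goes through verbatim at the level of spectral integrals.
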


\medskip 
\begin{proof}[Proof of Theorem \ref{lif-thm}]
We assume w.l.o.g.\ that $v_{+} =0$. This can always be achieved by adding the constant term $-v_{+} \one$ to $\Ha^{\omega}$. 
In order to prove the assertion, we will construct an upper and a lower bound on $1- \N(1-E)$, which, after taking double logarithms, will asymptotically coincide as $E\searrow 0$. 

In what follows we choose some fixed cluster $\Q \equiv Q_\kappa(x_0),\, x_0 \in \X$, of finite rank $\kappa\in\mathbb{N}$ and let $0<E\ll 1$ be arbitrary but fixed. 

(a) \quad \emph{Upper bound.}\quad We recall the notation \eqref{e-not} and estimate the expectation of the finite-volume eigenvalue counting function $N_{\Dk}^{\omega}(E')$ for every $E'\in\mathbb{R}$ in terms of the maximal eigenvalue $\Emd \equiv e_{\Dk}^{\omega}(|\Q|)$ of $\Had$ by 
\begin{align}
	\label{formereq} 
	\mathbb{E}\left[ \N_{\Dk}^{\omega}(E')\right] 
	&= \frac{1}{|\Q|} \; \mathbb{E} \biggl[ \sum_{j=1}^{|\Q|} 
		\Chi_{]-\infty, E']} \big(e_{\Dk}^{\omega}(j) \big)\biggr]  \nn\\ 
	& \geq \mathbb{E} \Bigl[ \Chi_{]-\infty, E']} \big( \Emd \bigr) \Bigr] 
	  = \mathbb{P} \bigl[ \Emd \leq E' \bigr]  .
\end{align}
Setting $E' =1-E$ and applying Lemma~\ref{sandwich}, we conclude that
\begin{equation}
	\label{r1}
	1-\N(1-E) \leq 1- \mathbb{E}\left[ \N_{\Dk}^{\omega}(1-E)\right] 
	\leq \mathbb{P} \big[ E_{\mathrm{max}}(\Ha_{\Dk}^{\omega}) > 1-E \big]. 
\end{equation}
To proceed further, we need an upper bound on $\Emd$. Temple's inequality cannot be applied directly: even the normalised trial function $\psi_0=|\Q|^{-1/2} \in \ell^{2}(\Q)$, which is the eigenfunction of $\Delta_{\Dk}$ corresponding to the maximal eigenvalue $E_{\mathrm{max}}(\Delta_{\Dk}) = 1$, will not satisfy the condition \eqref{precond} for $\kappa$ large enough. This problem is circumvented by introducing the auxiliary operator 
\begin{equation}
	 \Hd := \Delta_{\Dk} + V^\omega_{\kappa} \ge  \Delta_{\Dk} + V^{\omega} = \Had 
\end{equation}
on $\ell^{2}(\Q)$ with the new, higher potential $V^\omega_{\kappa}(x):= 
\max\left\{ \omega_{x}, - p_\kappa/3 \right\} \le 0$ for all $x\in\X$. This operator satisfies
\begin{equation}
	\label{calcabs}
	\langle \psi_{0}, \Hd\psi_{0} \rangle  \geq 1-\frac{p_\kappa}{3}
	> 1-p_\kappa = E_1 (\Delta_{\Dk}) \geq E_1(\Hd)
\end{equation}
and may thus be employed in Temple's inequality.
In order to simplify the right-hand side of \eqref{temple-eq}, we use the estimates $\langle \psi_{0}, \Hd\psi_{0} \rangle -  E_1(\Hd) \ge 2p_{\kappa}/3$ and 
$(V^\omega_{\kappa})^2 \leq (p_\kappa/3)|V^\omega_{\kappa}|$, and  arrive at 
\begin{equation}
	\Emd \leq 1 + \frac{1}{2|\Q|} \sum_{x\in\Q} V^\omega_{\kappa}(x).
\end{equation}
Together with \eqref{r1}, this implies
\begin{equation}
	\label{r2}
	1-\N(1-E) \leq \mathbb{P}\biggl[ \frac{1}{|\Q|} \sum_{x\in\Q} V^{\omega}_{\kappa}(x) > -2E \biggr].
\end{equation}
By hypothesis of Theorem~\ref{lif-thm}, the spectral dimension $d_{s}$ exists and the estimate \eqref{algdec} is valid. So we choose 
\begin{equation}
	\label{choice} 
	\kappa =  k(E) := \max\bigl\{r\in\mathbb{N}:|Q_r|\leq (\alpha E)^{- d_s/2}\bigr\}, 
\end{equation}
where $\alpha>0$ is a free parameter to be determined below. Since, by definition, $n=\rho^{d_{s}/2}$ for some $\rho >1$ and since $|Q_{r}|=n^{r}$, the inequality in \eqref{choice} is equivalent to 
$\rho^{-r} \ge \alpha E$. Therefore, \eqref{algdec} guarantees that for $E>0$ small 
enough, we have the estimate $p_{ k(E)} > C_1 \rho^{- k(E)} \geq C_1 E\alpha$ with some constant $C_{1}>0$. This estimate and \eqref{r2} imply
\begin{equation}
	\label{probestimate} 
	1-\N(1-E) \leq \mathbb{P}\biggl[ \frac{1}{|Q_{ k(E)}|} \sum_{x\in Q_{ k(E)}}  
		V^{\omega}_{ k(E)}(x) 	> - \frac{2p_{ k(E)}}{\alpha\,C_1}\biggr] 
		=: \mathbb{P} \big[\mathcal{A}_{ k(E)}\big]
\end{equation}
for all $E>0$ small enough.

For every $r\in\mathbb{N}$, and assuming $\alpha > 6/C_{1}$, it follows that the condition
\begin{equation}
	\frac{1}{|Q_{r}|} \, \Big| \Bigl\{ x\in Q_{r}: V^{\omega}_{r}(x) 
		> -\frac{p_r}{3} \Bigr\} \Big| \geq 1-\frac{6}{\alpha C_1} =:z >0
\end{equation}
is necessary for the event $\mathcal{A}_{r}$ to occur. Hence
\begin{equation} \label{r4} 
	 \mathbb{P} [\mathcal{A}_{r}]  \leq 
	 \mathbb{P} \biggl[ \frac{1}{|Q_{r}|} \,  \Big|\Bigl\{ x\in Q_{r}: V^{\omega}_{r}(x) > -\frac{p_r}{3} \Bigr\}  \Big|
	 \geq z\biggr]. 
\end{equation}
To eliminate the dependence of this probability on $p_r$, we pick 
$\gamma\in ] \inf \supp\mathbb{P}_0\,,0[$, introduce the i.i.d.\ random variables 
$\eta_x^{\omega}:=\Chi_{]\gamma,0]} (\omega_{x})$  for $x\in\X$ and note that
\begin{equation}
	 \Big|\Bigl\{ x\in Q_{r}: V^{\omega}_{r}(x) > -\frac{p_r}{3} \Bigr\} \Big|
	= \, \Big|\Bigl\{ x\in Q_{r}: \omega_{x} > -\frac{p_r}{3} \Bigr\}  \Big|
	\le \sum_{x\in Q_{r}} \eta_{x}^{\omega},
\end{equation}
where we assumed $\gamma \le - p_{r}/3$ in the last step. This implies
\begin{equation}
 	 \mathbb{P} [\mathcal{A}_{r}]  
	 \leq  \mathbb{P}\biggl[ \frac{1}{|Q_{r}|} \sum_{x\in Q_{r}} \eta_x^{\omega} \geq z \biggr]
\end{equation}
for all $r \in \mathbb{N}$ large enough, since $(p_{r})_{r}$ is a null sequence.
The latter probability can be handled with the standard estimate
$\mathbb{P}(X\geq\delta)\leq e^{-t\delta}\,\mathbb{E}(e^{tX})$ for any $t\geq 0$ and any random variable $X$. Thus there exists $r_{0}\in\mathbb{N}$ such that for every $r\in\mathbb{N}$ with $r \ge r_{0}$ we have 
\begin{equation}
	\mathbb{P} [\mathcal{A}_{r}]  
	\leq \e^{-tz|Q_{r}|}\, \mathbb{E} \big[ \e^{t \sum_{x\in Q_{r}} \eta_x^{\omega}} \big] 
	= \e^{-tz|Q_{r}|} \, \Big( \mathbb{E}_{0} \big[ \e^{t\eta_{x_0}^{\omega}} 
			\big] \Big)^{|Q_{r}|} 
	= \e^{-|Q_{r}|f(t)}
\end{equation}
for any $t\geq 0$, where $f(t):=tz-\ln\mathbb{E}_{0} [ \e^{t\eta_{x_0}^{\omega}}]$ and $\mathbb{E}_{0}$ is the expectation associated with the single-site distribution $\mathbb{P}_{0}$.
In view of \eqref{probestimate} this means that there exists $E_{u} >0$ such that for every $E \in]0, E_{u}]$ the estimate
\begin{equation}
 	\label{r6}
	1- N(1-E) \le \e^{-|Q_{ k(E)}| f(t)}
\end{equation}
holds for all $t \ge 0$.

Now we choose $\gamma$ close enough to $0$ such that $q:=\mathbb{E}_{0}(\eta_x^{\omega}) =\mathbb{P}_0(\, ]\gamma,0]\, ) \in \; ]0,1[$. This is always possible in view of conditions \eqref{Ass1} and \eqref{Ass2}. 
By adjusting the free parameter $\alpha> 6/C_{1}$ large enough, we ensure in addition that $q<z$. Then we have $f(0)=0$ and 
\begin{equation} 
	f'(0) = z - \frac{\mathbb{E}_{0}\big[\eta_{x_0}^{\omega} \e^{t\eta_{x_0}^{\omega}}\big]}{\mathbb{E}_{0}
		\big[\e^{t\eta_{x_0}^{\omega}}\big]}\Bigg|_{t=0}=z-q>0 .
\end{equation}
Hence, there is a $t_0 > 0$ such that $f(t_0) > 0$. We remark that neither $t_{0}$ nor $f(t_{0})$ depend on $E$. Definition \eqref{choice} implies 
\begin{equation}
	\label{impl} 
	|Q_{ k(E)}| = \frac{1}{n}|Q_{ k(E)+1}| \geq \frac{1}{n}(\alpha E)^{-d_s/2}
	\end{equation} 
for the homogeneous model. This estimate and \eqref{r6} then yield the desired upper bound 
\begin{equation} 
	\label{ub}
	1- N(1-E)  \leq \exp\big\{ - C_{u}\, E^{- d_s/2} \big\} 
\end{equation}
for all $E\in ]0, E_{u}]$ with the constant $C_{u} := f(t_0)\,\alpha^{- d_s/2} / n > 0$.

(b)\quad \emph{Lower bound.}\quad 
This time we use the upper bound of Lemma~\ref{sandwich} and estimate
\begin{align} \label{l1}
	1- \N(1-E) &\geq 1- \mathbb{E}\bigl[ \N_{\Nk}^{\omega}(1-E)\bigr] \nonumber\\
	& = \frac{1}{|\Q|} \; \mathbb{E} \Bigl[  \big|\big\{\text{eigenvalues of } \Han > 1-E 
		\big\} \big| \Bigr] \nonumber\\
	& \geq \frac{1}{|\Q|} \; \mathbb{P}\bigl[ E_{\text{max}}(\Han) > 1-E\bigr],
\end{align}
where $\Emn \equiv e_{\Nk}^{\omega}(|\Q|) = \sup_{0 \neq \varphi\in \ell^2(\Q)} 
\langle\varphi,\Han\varphi\rangle / \langle\varphi,\varphi\rangle$ denotes the maximal 
eigenvalue of $\Han$.
The choice $\varphi =\psi_0 = |\Q|^{-1/2}$ for the trial function yields
\begin{equation}
	\Emn\geq \sum_{s=1}^\kappa p_s + \frac{1}{|\Q|}\sum_{x\in\Q}\omega_x .  
\end{equation}
Therefore we get together with \eqref{l1}
\begin{equation} 
  1 - \N(1-E)  \geq \frac{1}{|\Q|} \; \mathbb{P}\biggl[  \frac{1}{|\Q|}\sum_{x\in\Q}\omega_x  
  	> -E + \sum_{s=\kappa+1}^{\infty} p_s \biggr] 
\end{equation}	
So far, $\kappa\in\mathbb{N}$ was fixed arbitrary. Now we choose 
\begin{equation}
	\label{Kdef}
	\kappa = K(E) := \min \Big\{ r \in \mathbb{N}: \sum_{s=r+1}^{\infty} p_s< \frac{E}{2}\Big\}
\end{equation}
and conclude 
\begin{equation}
	\label{l2}
	1 - \N(1-E) \ge \frac{1}{|Q_{K(E)}|} \; \mathbb{P}\biggl[  \frac{1}{|Q_{K(E)}|}
		\sum_{x\in Q_{K(E)}}\omega_x  > - \frac{E}{2}  \biggr].
\end{equation}
Since the \RV{} are assumed to be i.i.d., we obtain
\begin{align}\label{l3}
	\mathbb{P}\biggl[  \frac{1}{|Q_{K(E)}|}\sum_{x\in Q_{K(E)}}\omega_x  > - \frac{E}{2}  \biggr]
	& \geq \mathbb{P}\biggl[ \forall x\in Q_{K(E)}:\; \omega_x > -\frac{E}{2}\biggr] \nn\\ 
	&= \Bigl(\mathbb{P}_{0}[ \omega_{x_0} > - E/2] \Bigr)^{|Q_{K(E)}|} \nn \\
	&= \e^{-g(E)|Q_{K(E)}|}
\end{align}
with $g(E) := - \ln \mathbb{P}_{0}[ \omega_{x_0} > - E/2] $ for every $E>0$. The function $g$ is well-defined because of Assumption \eqref{Ass2} on the tails of the single-site distribution. In fact \eqref{Ass2} implies the estimate
\begin{equation}
	\label{g-bound}
	 0 \le g(E) \le m |\ln(E/2)| - \ln C 
\end{equation}
for all sufficiently small $E>0$ with some $E$-independent constants $C,m >0$.

We have by definition \eqref{Kdef} that $\sum_{s= K(E)}^{\infty} p_{s} \ge E/2$.
In conjunction with the upper bound in \eqref{algdec}, this yields the existence of  $E_{l} >0$ 
such that $\rho^{K(E)} \le C_{l}^{2/d_{s}} E^{-1}$ for all $E \in ]0,E_{l}]$ with the constant $C_{l} := [2C_{2} \rho/(\rho -1)]^{d_{s}/2}$. Since $\rho^{d_{s}/2} =n$, we conclude  
\begin{equation}
	\label{Q-estimate}
	|Q_{K(E)}| \le C_{l} E^{-d_{s}/2}
\end{equation}
for all $E\in ]0,E_{l}]$. Collecting \eqref{l2}, \eqref{l3} and \eqref{Q-estimate}, we finally arrive at 
the desired lower bound
\begin{equation}
	\label{lb}
	1 - \N(1-E) \geq   C_{l}^{-1} E^{d_s/2} \exp\{ - C_{l} E^{- d_s/2} g(E) \} 
\end{equation}
for all $E\in ]0,E_{l}]$.

(c) \quad\emph{Limit $E\searrow 0$.} \quad The lower bound \eqref{lb} can be simplified by observing \eqref{g-bound}, enlarging the constant $C_{l}$ and diminishing $E_{l}$: there exist constants $\wtilde{C}_{l}, \wtilde{E}_{l} > 0$ such that 
\begin{equation}
	1 - \N(1-E) \geq   \wtilde{C}_{l}^{-1} \exp\{ - \wtilde{C}_{l} E^{- d_s/2} |\ln E|\} 
\end{equation}
for all $E\in ]0,\wtilde{E}_{l}]$. This clearly implies
\begin{equation}
 		\liminf_{E\searrow 0} \frac{ \ln \big|\ln \big[1 - \N(1-E) \big]\big|}{\ln E} 
		\ge - \frac{d_s}{2}.
\end{equation}
On the other hand, we deduce from the upper bound \eqref{ub} that
\begin{equation}
 		\limsup_{E\searrow 0} \frac{ \ln \big|\ln \big[1 - \N(1-E) \big]\big|}{\ln E} 
		\le - \frac{d_s}{2},
\end{equation}
and Theorem~\ref{lif-thm} is proven.
\end{proof}


\appendix

\section{Ergodicity}

Here we briefly compile the ergodic structure of the \HAM, which we have not found in the literature. Due to the use of ergodicity, our Lemmas \ref{det-spec} and~\ref{conv} give slightly stronger results, resp.\ have shorter proofs, than corresponding statements in \cite{K1, K2, K3}, who argued without it.

In what follows, we enumerate $\X$ as in \eqref{enumeration} and arrange the elements in increasing order from left to right, see Fig.~\ref{fighs}. Also, we think of $\X$ as being isomorphic to the space  
\begin{equation}
	\label{iso}
 	\X \cong \bigg\{ (\xi_{r})_{r \in\mathbb{N}} : \xi_{r} \in\{ 0,1, \ldots, n_{r} -1\} \text{~for all~} r \in\mathbb{N} \text{~and~} \sum_{r=1}^{\infty}\xi_{r} < \infty \bigg\} ,
\end{equation}
which consists of sequences with only finitely many non-zero elements.
The identification $x=(\xi_{r})_{r \in\mathbb{N}}$, which underlies \eqref{iso}, works as follows: 
$\xi_{1}$ determines the position of $x$ in $Q_{1}(x)$, where $\xi_{1}=0$ corresponds to the 
left-most position in Fig.~\ref{fighs}, $\xi_{1}=1$ to the second position from the left, and so 
on. Similarly, $\xi_{r}$ encodes the position of $Q_{r-1}(x)$ in $Q_{r}(x)$ for every $r \ge 2$, 
where, again, cluster positions are counted from the left, starting with $0$. For example, in Fig.~\ref{fighs}  we have $x_5 = (2,1,0,0,\ldots)$. Every $x\in\X$ 
eventually belongs to $Q_{r_{0}}(x_{0})$, the left-most cluster of rank $r_{0}$, for some sufficiently 
large $r_{0}$. Therefore any sequence $(\xi_{r})_{r \in\mathbb{N}}$ has only finitely many non-zero 
elements. Moreover, we have the representation $x_{k} = (\xi_{r})_{r \in\mathbb{N}}$, if and only 
if $k= \sum_{r=1}^{\infty} \xi_{r} |Q_{r-1}|$.

The countable space $\X$ can be equipped with an Abelian group structure, which we write in an additive way  
\begin{equation}
	\label{group}
	x+ y := \big( (\xi_{r} + \eta_{r} ) \!\!\!\mod n_{r}\big)_{r\in\mathbb{N}}
\end{equation}
for all $x= (\xi_{r})_{r \in\mathbb{N}}$ and $y= (\eta_{r})_{r \in\mathbb{N}}$ in $\X$. 
The identity of the group is given by $x_{0}$ and the inverse of $x$ by $-x := (n_{r} - \xi_{r})_{r \in\mathbb{N}}$.

The discrete Abelian group $\X$ acts on $\Omega$ according to $\X \times \Omega \rightarrow \Omega$, $(x,\omega) \mapsto \tau_{x}(\omega)$, where   
\begin{equation}
  \tau_{x}(\omega) := (\omega_{x+y})_{y \in\X} 
\end{equation}
for every $x\in\X$ and every $\omega := (\omega_{y})_{y\in\X} \in \Omega$. Since $ \mathbb{P} = \bigotimes_{x\in\X} \mathbb{P}_{0}$, every $\tau_{x}$ is measure-preserving, and we have  

\begin{lemma}
 	\label{ergodic-shifts}
 	The group of transformations $\{\tau_{x}\}_{x\in\X}$ is ergodic with respect to $\mathbb{P}$.
\end{lemma}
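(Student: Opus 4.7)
The plan is to establish the stronger property of strong mixing, which is more than enough for ergodicity, by exploiting two structural facts: $\mathbb{P}$ is an i.i.d.\ product measure, and the $\tau_{x}$ act by translation of the coordinate index in the countably infinite Abelian group $\X$ defined by \eqref{group}. The strategy is the standard Bernoulli-shift argument, adapted from $\mathbb{Z}^{d}$ to the present group $\X$: cylinder events get translated off themselves and hence become independent of any approximant to an invariant set.

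First I would fix a Borel set $A\subseteq\Omega$ with $\tau_{x}(A)=A$ for every $x\in\X$. Given $\varepsilon>0$, I would use that the product Borel $\sigma$-algebra on $\Omega$ is generated by the algebra of cylinder events to pick a cylinder set $B$ depending only on coordinates in a finite set $F\subset\X$ with $\mathbb{P}(A\triangle B)<\varepsilon$. Since $\X$ is infinite and $F-F=\{y-z:y,z\in F\}$ contains at most $|F|^{2}$ elements, I can choose $x\in\X$ with $x\notin F-F$, which is equivalent to $(x+F)\cap F=\emptyset$. Consequently, $\tau_{x}(B)$ is cylindrical in the coordinates labelled by $x+F$, disjoint from those of $B$. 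By the product structure of $\mathbb{P}$, the events $B$ and $\tau_{x}(B)$ are $\mathbb{P}$-independent, so that $\mathbb{P}(B\cap\tau_{x}(B))=\mathbb{P}(B)^{2}$.

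Next I would combine measure-preservation and invariance: since $\tau_{x}$ preserves $\mathbb{P}$ and $\tau_{x}(A)=A$ up to null sets, one has $\mathbb{P}(A\triangle\tau_{x}(B))=\mathbb{P}(A\triangle B)<\varepsilon$. Two elementary applications of the triangle inequality for symmetric differences then give $|\mathbb{P}(A\cap\tau_{x}(B))-\mathbb{P}(A)|<\varepsilon$ and $|\mathbb{P}(A\cap\tau_{x}(B))-\mathbb{P}(B)^{2}|<2\varepsilon$, together with $|\mathbb{P}(B)^{2}-\mathbb{P}(A)^{2}|<2\varepsilon$. Combining these bounds produces $|\mathbb{P}(A)-\mathbb{P}(A)^{2}|<5\varepsilon$. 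Letting $\varepsilon\searrow 0$ forces $\mathbb{P}(A)=\mathbb{P}(A)^{2}$, hence $\mathbb{P}(A)\in\{0,1\}$, which is the definition of ergodicity for the $\X$-action.

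No step poses a real obstacle: the measure-theoretic approximation is routine, and the only point that is specific to the hierarchical set-up is checking that the group $\X$ from \eqref{group} really is infinite and that $F-F$ is finite for finite $F$. Both are immediate—$\X$ is countably infinite by construction, and $F-F$ is the image of $F\times F$ under the group law. So the entire argument reduces to the observation that on a product probability space, translations by an infinite group of coordinate permutations are automatically mixing.
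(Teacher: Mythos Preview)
Your argument is correct and follows essentially the same route as the paper: approximate the invariant set by a cylinder, translate the cylinder off its own support so that independence under the product measure yields $\mathbb{P}(B\cap\tau_x B)=\mathbb{P}(B)^2$, and then compare via symmetric-difference estimates to conclude $\mathbb{P}(A)=\mathbb{P}(A)^2$. The only cosmetic difference is that the paper constructs the displacing element explicitly from the cluster structure (taking $w=(\delta_{r,r_0+1})_r$ so that $w+Q_{r_0}(x_0)$ is disjoint from $Q_{r_0}(x_0)$), whereas you invoke the pigeonhole fact $|F-F|<\infty<|\X|$; also note that your opening line promises strong mixing, but what you actually (and correctly) prove is the $0$--$1$ law for invariant sets.
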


\begin{proof}
	Let $A \in \bigotimes_{x\in\X} \mathcal{B}_{\mathbb{R}}$ be invariant under $\tau_{x}$ for every $x\in\X$. We show $\PP[A]$ is either $0$ or $1$. Let $\varepsilon >0$ be given. Since every product-measurable set can be approximated arbitrarily well by cylinder sets, there exists  $\kappa \in\mathbb{N}$ and $Z = \bigtimes_{x\in\X} Z_{x}$ with $Z_{x} \in\mathcal{B}_{\mathbb{R}}$ for every $x\in\X$ and $Z_{x} = \mathbb{R}$ for every $x \in \{x_{k} \in\X : k \ge \kappa \}$, and such that 
	\begin{equation}
		\label{approx}
 		\PP[A \sydi Z] \le \varepsilon.
	\end{equation}
	Here $\sydi$ denotes the symmetric difference. Pick $r_{0} \in\mathbb{N}$ such that $x_{\kappa}\in Q_{r_{0}}(x_{0})$ and define $w:= (\delta_{r,r_{0}+1})_{r\in\mathbb{N}} \in\X$. Hence, 
	$\big(w + Q_{r_{0}}(x_{0})\big) \cap Q_{r_{0}}(x_{0})= \emptyset$, which in turn implies the crucial identity
	\begin{equation}
		\label{ergo-start}
 		\PP\big[ \tau_{w}(Z) \cap Z \big] = \PP\big[ \tau_{w}(Z) \big] \;\PP[Z] = \big( \PP[Z]\big)^{2}
	\end{equation}
	because $\tau_{w}(Z) = \bigtimes_{x\in\X} Z_{w+x}$. Using invariance of $A$ under $\tau_{w}$ and 
	\eqref{ergo-start}, we conclude
	\begin{align}
		\label{event-chain}
 		0 \le \PP[A] -\big(\PP[A]\big)^{2} 
		&= \PP\big[ A \cap \tau_{w}(A) \big] -  \PP\big[ Z \cap \tau_{w}(Z) \big] + 
					\big(\PP[Z]\big)^{2} - \big(\PP[A]\big)^{2} \nonumber\\
			&\le \PP\big[ \big(A \cap \tau_{w}(A)\big) \sydi \big(Z \cap \tau_{w}(Z)\big) \big]
				+ 2 \,\PP[A \sydi Z]  \nonumber\\
			&\le  \PP \big[ (A \sydi Z) \cup \big( \tau_{w}(A) \sydi \tau_{w}(Z)\big)\big]
					+ 2 \,\PP[A \sydi Z]   \nonumber\\
			&\le 4\varepsilon.
	\end{align}
	In the second line of \eqref{event-chain} we estimated $\PP[C] -\PP[D] \le \PP[C \sydi D]$ for events $C$ and $D$, which follows from $C=(C\setminus D) \cup (C \cap D) \subseteq (C \sydi D) \cup D$. The inequality in the third line of \eqref{event-chain} is based upon the inclusion
\begin{equation*}
 (C_{1} \cap D_{1}) \setminus (C_{2} \cap D_{2}) = (C_{1} \cap C_{2}^{c} \cap D_{1} ) \cup (D_{1} \cap D_{2}^{c} \cap C_{1}) \subseteq (C_{1} \setminus C_{2}) \cup (D_{1} \setminus D_{2})
\end{equation*}
and its mirror $1 \longleftrightarrow 2$.	 In order to get to the last  line of \eqref{event-chain} we used  
$\tau_{w}(C) \sydi \tau_{w}(D) = \tau_{w}(C\sydi D)$, the fact that $\tau_{w}$ is measure preserving and \eqref{approx}. Since $\varepsilon > 0$ is arbitrary in \eqref{event-chain}, this completes the proof.
\end{proof}

\noindent
The (F\o{}lner) sequence of growing clusters $\big(Q_{r}(x_{0})\big)_{r\in\mathbb{N}}$ exhausts the (amenable) ergodic group $\X$ and fulfils Shulman's temperedness condition \cite[Def.~1.1]{Lin01}. Therefore we can apply the general pointwise ergodic theorem of Lindenstrauss \cite[Thm.~1.2]{Lin01} and conclude 

\begin{lemma}[Birkhoff ergodic theorem]
	\label{birk}
	For every $\PP$-integrable random variable $h: \Omega \rightarrow \mathbb{C}$ we have 
	\begin{equation}
		\lim_{r\to\infty} \frac{1}{|Q_{r}|} \; \sum_{x\in Q_{r}(x_{0})} h\big(\tau_{x}(\omega)\big)
	 	= \mathbb{E}[h]
	\end{equation}
	for $\PP$-almost every $\omega\in\Omega$.
\end{lemma}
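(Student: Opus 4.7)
The plan is to verify the hypotheses of Lindenstrauss's pointwise ergodic theorem \cite[Thm.~1.2]{Lin01}. Measure-preservation and ergodicity of the action $\{\tau_{x}\}_{x\in\X}$ have already been secured in Lemma~\ref{ergodic-shifts}, so the task reduces to confirming that $\X$ is a countable amenable group and that $\bigl(Q_{r}(x_{0})\bigr)_{r\in\mathbb{N}}$ is a tempered F\o{}lner sequence.

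The key structural observation is that each $Q_{r}(x_{0})$ is in fact a subgroup of $\X$. Indeed, under the identification \eqref{iso}, $Q_{r}(x_{0})$ consists exactly of those sequences $(\eta_{r'})_{r'\in\mathbb{N}}$ with $\eta_{r'}=0$ for all $r'>r$, a property preserved by coordinatewise addition modulo $n_{r'}$; moreover $Q_{s}(x_{0}) \subseteq Q_{r}(x_{0})$ whenever $s\le r$. In particular, $\X$ itself is a countable abelian, hence amenable, group.

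For the F\o{}lner property, fix $x\in\X$. Since only finitely many components of $x$ are non-zero, there exists $r_{0}\in\mathbb{N}$ with $x \in Q_{r_{0}}(x_{0}) \subseteq Q_{r}(x_{0})$ for every $r\ge r_{0}$. The subgroup property then gives $x + Q_{r}(x_{0}) = Q_{r}(x_{0})$, so the symmetric difference $Q_{r}(x_{0}) \sydi \bigl(x+Q_{r}(x_{0})\bigr)$ is empty for all $r \ge r_{0}$, and the F\o{}lner condition holds trivially. For Shulman's temperedness condition, one must bound $\bigl| \bigcup_{s<r} \bigl( -Q_{s}(x_{0}) \bigr) + Q_{r}(x_{0}) \bigr|$ by a constant multiple of $|Q_{r}(x_{0})|$; again the subgroup structure collapses each summand $\bigl(-Q_{s}(x_{0})\bigr)+Q_{r}(x_{0})$ to $Q_{r}(x_{0})$ for $s<r$, so the entire union equals $Q_{r}(x_{0})$ and temperedness holds with constant $1$.

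Applying Lindenstrauss's theorem to any $h \in L^{1}(\Omega,\PP)$ then delivers the claimed almost-sure convergence to $\mathbb{E}[h]$. There is essentially no obstacle here: the hierarchical product structure makes the clusters subgroups, which collapses both the F\o{}lner and temperedness verifications to triviality. The only point deserving an explicit sanity check is that the sequence-space realisation \eqref{iso} is compatible with the geometric notion of cluster from (H1)--(H3), which follows from the construction of the enumeration \eqref{enumeration}.
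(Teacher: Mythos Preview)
Your proof is correct and follows exactly the route the paper indicates: verify that $(Q_{r}(x_{0}))_{r}$ is a tempered F\o{}lner sequence in the amenable group $\X$ and invoke Lindenstrauss's theorem. The paper merely asserts these hypotheses in the sentence preceding the lemma, whereas you supply the clean justification via the observation that each $Q_{r}(x_{0})$ is a subgroup of $\X$, which indeed collapses both the F\o{}lner and temperedness checks to trivialities.
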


\noindent
The reason for introducing the particular group structure \eqref{group} is its compatibility with the hierarchical structure on $\X$. This underlies 

\begin{lemma} 	
	\label{covariant}
 	The hierarchical Anderson Hamiltonian \eqref{Hamiltonian} transforms covariantly under the group 
	action on $\Omega$,
	\begin{equation}
		\Ha^{\tau_{x}(\omega)} = U_{x}^{*} \Ha^{\omega} U_{x}^{\phantom{*}}
	\end{equation}
	for every $x \in\X$ and every $\omega\in\Omega$. Here we have introduced the unitary representation 
	of the group $\X$ on $\ell^{2}(\X)$, given by $(U_{x}\psi)(y) := \psi(y-x)$ for all $x,y\in\X$ and all $\psi \in \ell^{2}(\X)$.
\end{lemma}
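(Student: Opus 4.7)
The plan is to reduce the covariance identity to two independent pieces: covariance of the potential (which is immediate from the definition of $\tau_x$) and invariance of the hierarchical Laplacian under the unitary translations $U_x$. The main ingredient is that the Abelian group structure \eqref{group} on $\X$ is designed precisely so that translations permute clusters of every rank.

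First I would record the auxiliary fact that $U_x$ is unitary with $U_x^{*} = U_{-x}$, so that $(U_x^{*}\psi)(y) = \psi(y+x)$ for all $\psi\in\ell^2(\X)$ and all $x,y\in\X$. Then I would establish the key combinatorial statement:
\begin{equation}
Q_s(y+x) = Q_s(y) + x \qquad \text{for all } s\in\mathbb{N},\; x,y\in\X.
\end{equation}
Using the isomorphism \eqref{iso}, two elements $y=(\eta_r)_r$ and $z=(\zeta_r)_r$ lie in the same rank-$s$ cluster if and only if $\eta_r = \zeta_r$ for every $r>s$. Since \eqref{group} acts componentwise modulo $n_r$, adding $x=(\xi_r)_r$ preserves equality of coordinates in every rank, and in particular for $r>s$. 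Hence $y \in Q_s(z) \Leftrightarrow y+x\in Q_s(z+x)$, which is exactly the displayed identity.

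Given this, I would compute directly for $\psi\in\ell^2(\X)$, $x,y\in\X$:
\begin{equation}
(U_x^{*} \E_s U_x\psi)(y) = (\E_s U_x\psi)(y+x) = \frac{1}{|Q_s|}\sum_{z\in Q_s(y+x)} \psi(z-x) = \frac{1}{|Q_s|}\sum_{z'\in Q_s(y)}\psi(z') = (\E_s\psi)(y),
\end{equation}
where the substitution $z' = z-x$ together with $Q_s(y+x)-x = Q_s(y)$ gives the penultimate equality. Thus $U_x^{*} \E_s U_x = \E_s$ for every $s\in\mathbb{N}_0$, and summing against the weights $p_s$ yields $U_x^{*} \Delta U_x = \Delta$. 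For the potential,
\begin{equation}
(U_x^{*} V^\omega U_x\psi)(y) = \omega_{y+x}\,(U_x\psi)(y+x) = \omega_{x+y}\,\psi(y) = \bigl(V^{\tau_x(\omega)}\psi\bigr)(y),
\end{equation}
using commutativity of $+$ in $\X$. Adding the two identities gives $U_x^{*} H^\omega U_x = \Delta + V^{\tau_x(\omega)} = H^{\tau_x(\omega)}$.

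I do not expect any real obstacle here; the only subtle point is the cluster-translation identity $Q_s(y+x) = Q_s(y) + x$, which is purely combinatorial and follows from unraveling the definitions \eqref{iso} and \eqref{group}. Everything else is a bookkeeping calculation with indicator sums and the definitions of $U_x$, $\E_s$, and $V^\omega$.
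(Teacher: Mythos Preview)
Your proposal is correct and follows essentially the same route as the paper: you split $H^{\omega}=\Delta+V^{\omega}$, verify covariance of the potential by direct computation, and reduce invariance of $\Delta$ to the cluster-translation identity $Q_s(y+x)=Q_s(y)+x$, which you justify via the coordinate description \eqref{iso} just as the paper does. The only cosmetic difference is that you prove $U_x^{*}\E_s U_x=\E_s$ for each $s$ and then sum, whereas the paper carries the sum over $s$ through the computation; the content is identical.
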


\begin{proof}
 	Clearly we have 
	\begin{align}
 		(U_{x}^{*} V^{\omega} U_{x}^{\phantom{*}}\psi)(y) &= (V^{\omega}U_{x}\psi)(x+y) 
		= \omega_{x+y} (U_{x}\psi)(x+y) \nonumber\\
		&= \big(\tau_{x}(\omega)\big)_{y} \psi(y) = (V^{\tau_{x}(\omega)}\psi)(y),
	\end{align}
	so that it remains to verify the invariance of the hierarchical Laplacian 
	\begin{align}
		\label{delta-cov}
 		(U_{x}^{*} \Delta U_{x}^{\phantom{*}}\psi)(y) &= \sum_{s=1}^{\infty} p_{s} (\E_{s}U_{x}\psi)(x+y) 
		= \sum_{s=1}^{\infty} p_{s} \;\frac{1}{|Q_{s}|} \, \sum_{v \in Q_{s}(x+y)}  (U_{x}\psi)(v) \nonumber\\
		&= \sum_{s=1}^{\infty} p_{s} \;\frac{1}{|Q_{s}|} \, \sum_{v \in x+ Q_{s}(y)}  \psi(v-x)\nonumber\\
		&= \sum_{s=1}^{\infty} p_{s} \;\frac{1}{|Q_{s}|} \, \sum_{w \in Q_{s}(y)}  \psi(w)
			= (\Delta\psi)(y).
	\end{align}
 The equality in the second line of \eqref{delta-cov} rests on the identity 
	\begin{align}
 		Q_{s}(x+y) &= \Big\{ (\zeta_{r})_{r\in\mathbb{N}} : \; \zeta_{r}  \in \{0, \ldots, n_{r}-1\} 
		\quad \text{for~~}	r \le s, \nonumber\\[-1ex]
		 & \hspace*{2.15cm} \zeta_{r} =  (\xi_{r} + \eta_{r} ) \!\!\!\!\mod n_{r} \quad
		 	\text{for~~}	r > s\Big\} 	\nonumber\\
		&= x+ Q_{s}(y)	
	\end{align}
	for all $s\in\mathbb{N}$, $x= (\xi_{r})_{r \in\mathbb{N}}$ and $y= (\eta_{r})_{r \in\mathbb{N}}$ in $\X$. 
\end{proof}


\end{document}